\documentclass{article}

\usepackage{algorithm}
\usepackage{algorithmic}
\usepackage[T1]{fontenc}
\usepackage[utf8]{inputenc}
\usepackage{lmodern}
\usepackage{amsfonts}
\usepackage{amsmath}
\usepackage{amssymb}
\usepackage[standard]{ntheorem}
\usepackage[english]{babel}
\usepackage{url}
\usepackage{graphicx}
\usepackage{color}
\usepackage{subfigure}
\usepackage{listings}


\begin{document}

\title{Epidemiological Approach for Data Survivability in Unattended Wireless Sensor Networks}

\author{Jacques M. Bahi, Christophe Guyeux, Mourad Hakem, and Abdallah Makhoul}

\maketitle

\begin{abstract}
Unattended Wireless Sensor Networks (UWSNs) are Wireless Sensor Networks characterized by sporadic sink presence and operation in hostile settings. The absence of the sink for period of time, prevents sensor nodes to offload data in real time and offer greatly increased opportunities for attacks resulting in erasure, modification, or disclosure of sensor-collected data. In this paper, we focus on UWSNs where sensor nodes collect and store data locally and try to upload all the information once the sink becomes available. 
One of the most relevant issues pertaining UWSNs is to guarantee a certain level of information survivability in an unreliable network and even in presence of a powerful attackers. 
In this paper, we first introduce an epidemic-domain inspired approach to model the information survivability in
UWSN. Next, we derive a fully distributed algorithm that supports these
models and 
give the correctness proofs.

\end{abstract}

\section{Introduction}

Unattended Wireless Sensor Networks (UWSNs), which have been introduced by Di Pietro \emph{et al.} in~\cite{DiPietro08}, are WSNs characterized by the sporadic presence of the sink. These UWSNs are useful for instance to detect poaching in a national park, or as a monitoring system to check the pressure of an underground pipeline, as
stated in ~\cite{DiPietro13}. In such networks, nodes collect data from the area under consideration, and then they try to upload all the stored data when the sink comes around. It is motivated by the scenarios where only historical information or digest data, not real-time data, are of interest. For example, what is average temperature during last three months; what is the highest and lowest humidity degree during last 24 hours; or more specifically, what is average content of a chemical element in soil during last half year~\cite{9}. 

Due to the absence of a direct and alive connection with the
sink, these networks are more subject to malicious attacks
than traditional WSNs. Sensor nodes may malfunction due
to some threats, e.g., physical failure such as melting, corroding
or getting smashed, and more sophisticated, mobile
adversary attacks. Some data may be lost, erased or
modified before the arrival of the mobile collectors, which
significantly affects functionalities of UWSNs. Therefore,
the critical issue for UWSNs is how to maximize information
survivability~\cite{Ma08}: the dimension of the area is often prohibitive in such networks, while the absence of the sink facilitates the work of attackers. Data or information survivability consists on preserving data for a long period of time in the face of attacks, which is crucial for designing safe UWSNs.

To ensure data survivability, cryptographic or noncryptographic
approaches can be used. End-to-end encryption schemes that support operations over cypher-text
have been proved important for private party sensor network
implementations or other security schems~\cite{bgm11:ij, 10, 11}. Unfortunately, nowadays these methods are
very complex and not completely suitable for sensor nodes having limited
resources. Non-cryptographic approaches are suitable for low-cost sensors,
that do not have the capability to execute computational
intensive calculation. In this paper, we will focus on non-cryptographic
approaches for data survivability. We propose an epidemic-domain inspired approach to model the information survivability in UWSN.

The model we present here is based on both SIR (Susceptible - Infected - Recovered) and SIS (Susceptible - Infected - Susceptible) models. A node is susceptible to a data item when it is online and functioning normally; he can receive the information that must survive. Intuitively, the model we
focus on resembles an SIR model studied previously in~\cite{DiPietro11,DiPietro13}. Our novelty is that we study arbitrary dynamic network topologies instead of static networks.
In a next step, 
we provide a fully distributed algorithm which supports/covers different epidemic models. The aim of this algorithm is to ensure data survivability in UWSN by maintaining a subset of safe nodes in working state while replacing/locking the attacked ones when needed. 

The remainder of the paper is organized as follows: Section~\ref{PW} briefly reviews the related work. The SIR model for Data Survivability in UWSNs is presented in Section~\ref{SIR}. Section~\ref{USIR} details the proposed epidemic schemes in a comparatively manner.
We present in section~\ref{sec:algo} the design and analysis of the proposed epidemic algorithm and give the proofs.
The next section is devoted to numerical simulations.
Finally, Section~\ref{CONC} concludes this research work.

\section{Previous work}
\label{PW}

Wireless sensor networks have attracted a lot of interest over the last decade, for quick and efficient aggregation of information~\cite{1,2}, for understanding "trust" and "distrust" in online networks~\cite{bgmp12:ij, bgm11:ij, 3}, and in several other areas. In most previous researches on WSNs, sensed data are assumed to be collected in a real-time manner and sensor nodes always connected to the sink that is generally the only unconditionally trusted entity. In order to extending the network lifetime a possible solution presented in the literature and consists on using mobile sinks or robots to perform several tasks instead of the nodes themselves. For instance, the authors in~\cite{Mak2007, Mak08} propose a mobile beacon based approach to localize sensor nodes and ensure the network coverage. In~\cite{12}, a distributed algorithm is proposed in order to reduce the overhead message by using only local information and assisted by mobile sinks. Recently, a number of approaches exploiting sink mobility for data
collection in WSNs have been proposed~\cite{13, 14, 15, 16}. The main objective of these work is to reduce the energy consumption by optimizing the number of communications between nodes. In these approaches only single hop communication is required between nodes and the mobile sinks. Otherwise, in some cases and scenarios sensor nodes can not be connected for a short or long period of time and the network is left unattended while sensor nodes cannot offload data in real time. Therefore, sensor nodes wait for the new passing of a mobile sink for examle to send their data. Unattended Wireless Sensor Networks (UWSNs), have been introduced by Di Pietro \emph{et al.} in~\cite{DiPietro08}, where adversaries can compromise some sensor nodes and selectively destroy data. Other works have studied this problem in order to prevent such attacks, and ensure the "data survivability" in UWSN. 

The authors in~\cite{4} proposed a method to face an adversary that
indiscriminately erases all sensor data, and then in~\cite{5} cryptographic
techniques that prevent the adversary from recognizing
data that it aims to erase have been introduced. Sensor
cooperation to achieve self-healing in stationary UWSNs
has been explored in~\cite{Ma08}. \cite{6} proposes a new strategy based on the concept of secure multi-party protocols. The main advantages of the proposed strategy are not limited to the security, since it preserves privacy, enables the use of data aggregation and enforces a level of trust among nodes, which collaborate to compute aggregation functions. In~\cite{7}, the authors propose a dependable and efficient data survival scheme to maximize the data survival degree upon data retrieval. This technique makes use of computational secret sharing to achieve fault tolerance and compromise resilience, and uses network coding to further improve communication efficiency. In~\cite{8}, the authors focus on the conditions under which a sensor node can survive in an unreliable network. They propose and solve the problem using non-linear dynamical systems and fixed point stability theorems.

The epidemiology community has developed
the so-called SIR and SIS models~\cite{DiPietro11,DiPietro13} of infection. The
SIS model (Susceptible - Infected - Susceptible) is suitable
for, e.g., the common flu, where nodes may be infected, healed
(and susceptible), and infected again. The SIR model (Susceptible
- Infected - Recovered) is suitable for, say, mumps, where
a node, after being infected, becomes recovered (with life-time
immunity). SIS, SIR, and SIRS models have been investigated by authors of these
research works, in order to derive the parameters that can assure
information to survive. In these articles,
$S(t)$ compartment is constituted by sensors that do not possess 
the datum at time $t$, while $I(t)$ is the compartment of sensors that
possess it. Finally, the $R(t)$ compartment is constituted by sensors 
that have been compromised by the attacker.

On the other side and surprisingly, the authors in~\cite{DiPietro11,DiPietro13} never
consider that in a wireless sensor network, nodes' energy is provided
by a battery that can be emptied due to data acquisition, transmission,
or simply functioning cost of keeping alive. More precisely, the topology
of the networks they consider is static, the network's lifetime is unbounded,
and sensors cannot die due to empty batteries. Indeed, their work is more
related to unattended wired sensor networks, on main power but not with a 
battery as $S+I$ (SIS model) or $S+I+R$ (SIR and SIRS models) are 
constant. Our intention is to deepen their interesting work, by bringing their
proposal from wired sensor networks to WSNs, refining theirs models, and 
producing more theoretical results on each model. 

\section{A SIR model for Data Survivability in UWSNs}
\label{SIR}



\subsection{Introducing the Kermack \& McKendrick model}

In this section, the SIR model formerly presented in~\cite{DiPietro11,DiPietro13} 
is firstly recalled. Then, consumption hypotheses underlined in this model
are precised while theoretical results on the behavior of the compartments
of the network are further investigated.

In unattended wireless sensor networks the presence
of the sink is  sporadic. However the duration between two visits of the
sink to the network (its absence) can sometimes be considered negligible, 
in a first approximation,
compared to the time required to empty a sensor battery. 
In such UWSNs, the death processes of sensors can be neglected 
 if the aim is to study the immediate consequences of an attack between
 two visits of the sink. Under such an assumption, the global network can be divided in three
compartments, namely the sensors  $S$ susceptible to receive the datum of
interest (intrusion detection, etc.), the ones that currently store it $I$, 
and the recovered sensors $R$ that have been 
compromised by the attacker: their stored datum has been recovered.

Suppose now that between $S$ and $I$, the transmission rate is $b I$, where
$b$ is the contact rate, which is the probability of transferring the information in a
contact between a susceptible sensor and a sensor having the datum. Indeed, as proven
by Di Pietro \emph{et al.}, such a situation occurs when then
wireless sensor network is composed
by $n$ sensor, and if each sensor forwards the datum with
probability $\dfrac{\alpha}{n}$~\cite{DiPietro11,DiPietro13} ($\alpha$ is the transition rate). 

Suppose additionally that the rate to pass between $I$ and $R$, is $c$:
the attacker is able to individuate the sensors containing the target
information, and to destroy each of them with this probability $c$.
Notice that, if the duration of the information survivability
is $D$, then $c = \frac{1}{D}$, as a sensor experiences one recovery
in $D$ units of time.

\begin{figure}[ht]
\centering
\includegraphics{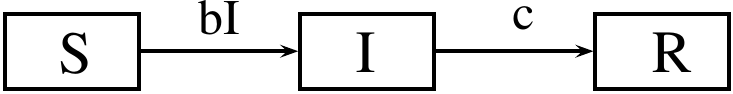}
\caption{SIR model}
\label{SIRmodel}
\end{figure}

Under such hypotheses and as stated in~\cite{DiPietro11,DiPietro13}, 
the sensors population follows the so-called SIR model 
of Kermack \& McKendrick~\cite{Kermack27} depicted in Figure~\ref{SIRmodel}.
Remark that the total sensors population is equal to $N=S+I+R=S_0+I_0+R_0$, 
which is a constant: the number of awaken, alive sensors does not evolve. 
In particular, only two of the three populations of sensors have to be studied.

\subsection{Firsts theoretical results}

Consider now that $x(t)=\dfrac{X(t)}{N}$ denotes the fraction of individuals
in the compartment $X$.
The SIR model can be expressed by the following set of ordinary non-linear differential equations: 
\begin{equation}
\label{modelSir1}
\left\{
\begin{array}{l}
\dfrac{ds}{dt} = - b i s\\\\
\dfrac{di}{dt} = b i s - c i\\\\
\dfrac{dr}{dt} = c i .\\
\end{array}
\right.
\end{equation}

Obviously, the typical time between transmissions is $T_t=b^{-1}$ while the 
typical time until attack when having the information is equal to
$T_e=c^{-1}$. Thus 
$$\dfrac{T_t}{T_e} = \dfrac{c}{b}$$
is the average number of transmissions between a sensor having the datum and
others before it lost this information due to the attacker.
Such a statement explain why, in the SIR historical model, the dynamics of the infectious 
class depends on the \emph{reproduction ratio} defined by 
$$R_0=\dfrac{b}{c},$$
which corresponds here to the expected number of new 
informed sensors (so-called ``secondary infections'')
providing a single sensor with the datum where all sensors are susceptible.
Furthermore, direct standard analysis manipulations (variables separation and then integration) lead 
to the following form for the susceptible sensors compartment: $s(t)=s(0) exp\left(-R_0 (r(t)-r(0))\right)$.

As $\dfrac{di}{dt}=(R_0s-1)ci$, if the basic reproduction number satisfies 
$R_0>\dfrac{1}{s(0)}$, there will be an information outbreak with an increasing
number of sensors with the datum. In other words, $R_0$ determines whether
or not the information will spread through the network.

All these facts are summarized in the proposition below.
\begin{proposition}
\label{SIRprop1}
Consider a sensor network that aims to monitor a given area, and that has to
spread an alert or an information to a sink, whose presence is sporadic.
Suppose that an attacker tries to remove the datum in sensors' memory, and that:
\begin{enumerate}
  \item all sensor activities are negligible, in terms of energy,
  \item when a sensor has the datum, it spreads the
information to its neighbors with a probability $b$, until being attacked.
\end{enumerate} 
Denote by $T_t$ the typical time between transmissions, $T_e$ the typical time
an informed sensor loses its information due to the attacker, and by $s(0)$ the initial 
fraction of susceptible sensors.
So the information will spread through the network if and only if $T_t<s(0) T_e$.
\end{proposition}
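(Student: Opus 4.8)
The plan is to translate the hypotheses of the proposition into the differential system \eqref{modelSir1} and then show that the claimed inequality is exactly the threshold condition for $\frac{di}{dt}$ to be positive at time $t=0$, equivalently for the information to spread. First I would note that hypothesis~1 (negligible energy cost of sensor activities) is what justifies neglecting the death processes, so that $N=S+I+R$ is constant and the normalized populations obey \eqref{modelSir1}; hypothesis~2, with transmission probability $b$ per contact and attacker removal rate $c=T_e^{-1}$, fixes the rates in the system, and I would record that $T_t=b^{-1}$ and $T_e=c^{-1}$ so that the reproduction ratio is $R_0=b/c=T_e/T_t$.

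Next I would use the computation already displayed before the proposition, namely that along trajectories
\[
\frac{di}{dt}=(R_0\,s-1)\,c\,i .
\]
Since $i>0$ and $c>0$ on any interval where the datum is present, the sign of $\frac{di}{dt}$ is the sign of $R_0\,s-1$. Because $\frac{ds}{dt}=-bis\le 0$, the fraction $s(t)$ is nonincreasing, so $s(t)\le s(0)$ for all $t\ge 0$; hence if $R_0\,s(0)\le 1$ then $R_0\,s(t)\le 1$ for all $t$, giving $\frac{di}{dt}\le 0$ throughout, so $i(t)$ never exceeds $i(0)$ and there is no outbreak. Conversely, if $R_0\,s(0)>1$, then at $t=0$ we have $\frac{di}{dt}\big|_{0}=(R_0 s(0)-1)\,c\,i(0)>0$, and by continuity $i$ increases on a nontrivial initial interval: the number of informed sensors grows, i.e. the information spreads. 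Thus the information spreads if and only if $R_0>\frac{1}{s(0)}$.

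Finally I would rewrite this threshold in the stated form: $R_0>\frac{1}{s(0)}$ means $\frac{b}{c}>\frac{1}{s(0)}$, i.e. $\frac{1}{c}>\frac{1}{b\,s(0)}$, i.e. $T_e>\frac{T_t}{s(0)}$, which is precisely $T_t<s(0)\,T_e$. This closes the equivalence.

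The main obstacle, and the point that deserves the most care, is making precise what ``the information will spread through the network'' means so that the ``if and only if'' is genuinely an equivalence rather than just the ``if'' direction. I would pin it down as ``$\frac{di}{dt}>0$ at the onset'' (equivalently, $i$ is initially increasing, so the informed population strictly grows before the attacker erodes it), mirroring the ``information outbreak'' terminology used just above the proposition; with that reading the monotonicity of $s$ gives the converse cleanly. A secondary subtlety is the degenerate cases ($i(0)=0$, or $s(0)=0$, or equality $R_0 s(0)=1$), which I would dispose of in a sentence: if $i(0)=0$ nothing spreads regardless, and the boundary case $R_0 s(0)=1$ yields $\frac{di}{dt}\big|_0=0$ with $s$ immediately decreasing, hence no outbreak, so it correctly falls on the non-spreading side of the strict inequality.
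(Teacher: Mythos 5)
Your argument is correct and takes essentially the same route as the paper: the paper derives the threshold from the identity $\frac{di}{dt}=(R_0 s-1)\,c\,i$ together with the monotonicity of $s$, and then restates $R_0>1/s(0)$ as $T_t<s(0)\,T_e$ using $R_0=b/c=T_e/T_t$. Your extra care in pinning down what ``spread'' means and in handling the boundary cases only makes explicit what the paper leaves implicit.
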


In other words, this proposition states that if the reproduction ratio is
greater than one, then an ``epidemic'' occurs since the prevalence (the infected
ratio) increases to a peak and then decreases to zero. Otherwise there is
no epidemic since the prevalence decreases to zero.

\begin{figure}[ht]
\centering
\includegraphics[scale=0.6]{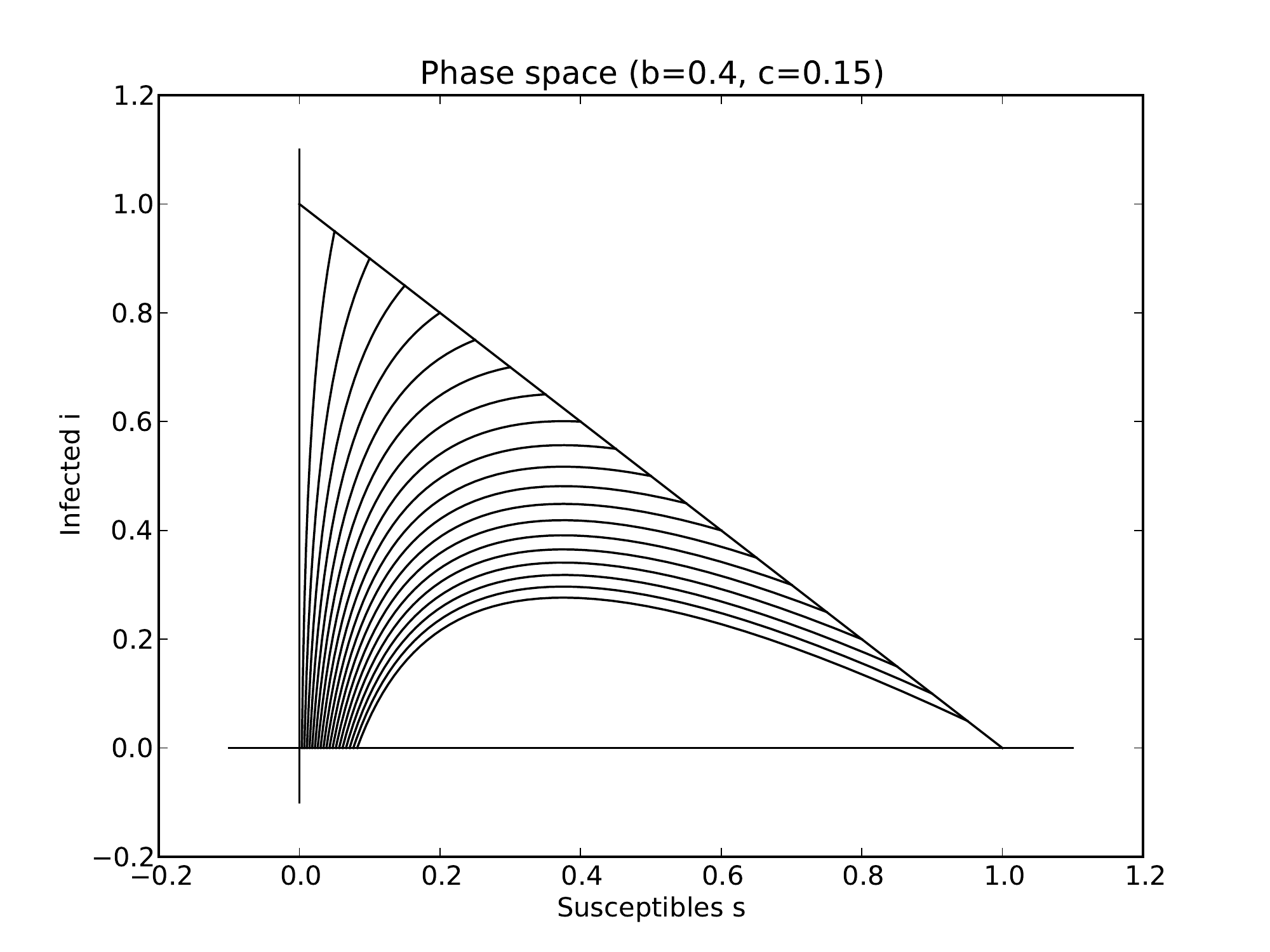}
\caption{Phase space $(s,i)$ with $b=0.4, c=0.15$ (SIR model).}
\label{img1}
\end{figure}

It is possible to be more precise in the formulation of Proposition~\ref{SIRprop1},
following an approach similar to~\cite{Hethcote00}.

\begin{proposition}
The fraction $s(t)$ of sensors susceptible to receive the information 
is a decreasing function. The limiting value
$s(\infty)$ is the unique root in $(0,\dfrac{T_e}{T_t})$ of the equation 
$$1-r(0)-s(\infty) + \dfrac{T_e}{T_t} ~ln\left(\dfrac{s(\infty)}{s(0)}\right).$$
Additionally,
\begin{itemize}
\item if $T_t\geqslant s(0) T_e$, then the fractional number $i(t)$ of sensors having the datum 
decreases to zero as $t \rightarrow \infty$,
\item else $i(t)$ first increases up to a maximum value equal to \linebreak $1-r(0)-\dfrac{T_e}{T_t}\left(1+ln\left(\dfrac{s(0)T_t}{T_e}\right)\right)$ and then decreases to
zero as $t\rightarrow \infty$, where $ln$ stands for the natural logarithm.
\end{itemize}
\end{proposition}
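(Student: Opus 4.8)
The plan is to run the classical phase-plane analysis of the Kermack--McKendrick system, in the spirit of~\cite{Hethcote00}, specialised to the normalisation $s+i+r=1$, under the natural standing hypotheses $s(0)>0$, $i(0)>0$, $r(0)\geqslant 0$. Positivity is then preserved for all $t$, since $i(t)=i(0)\exp\left(\int_0^t(b\,s(\tau)-c)\,d\tau\right)>0$, $s(t)=s(0)\exp\left(-b\int_0^t i(\tau)\,d\tau\right)>0$, and the conservation law confines the three fractions to $[0,1]$. I would first establish monotonicity and the existence of the limits. Since $\frac{ds}{dt}=-b\,i\,s<0$, $s$ is strictly decreasing, hence convergent to some $s(\infty)\in[0,s(0))$; since $\frac{dr}{dt}=c\,i>0$ and $r\leqslant s+i+r=1$, $r$ is strictly increasing, hence convergent to some $r(\infty)\leqslant 1$; therefore $i=1-s-r$ converges to $i(\infty)=1-s(\infty)-r(\infty)$. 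If $i(\infty)$ were positive, then $\frac{dr}{dt}=c\,i$ would stay bounded away from $0$ for all large $t$, forcing $r(t)\to\infty$ and contradicting $r\leqslant 1$; hence $i(\infty)=0$ and $r(\infty)=1-s(\infty)$, while passing to the limit in the identity $s(t)=s(0)\exp(-R_0(r(t)-r(0)))$ recalled above, together with $0\leqslant r(\infty)-r(0)\leqslant 1$, gives $s(\infty)\geqslant s(0)\,e^{-R_0}>0$.

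Next I would pin down $s(\infty)$ and its uniqueness. Letting $t\to\infty$ in $s(t)=s(0)\exp(-R_0(r(t)-r(0)))$ and substituting $r(\infty)=1-s(\infty)$ gives $s(\infty)=s(0)\exp\left(-R_0(1-r(0)-s(\infty))\right)$; taking logarithms and rearranging produces exactly the transcendental equation of the statement, with $\frac{1}{R_0}$ as the coefficient of the logarithmic term. For uniqueness, introduce $g(x)=1-r(0)-x+\frac{1}{R_0}\ln\frac{x}{s(0)}$ on $(0,\infty)$: one has $g(0^+)=-\infty$, the derivative $g'(x)=-1+\frac{1}{R_0\,x}$ vanishes only at $x=\frac{c}{b}=\frac{1}{R_0}$ with $g$ strictly increasing on $(0,\frac{c}{b})$ and strictly decreasing on $(\frac{c}{b},\infty)$, and $g(s(0))=i(0)>0$. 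Hence $g$ crosses zero exactly once on $(0,s(0))$, and that crossing lies in $(0,\frac{c}{b})$ because $g$ is already positive at $\min(s(0),\frac{c}{b})$; since $g(s(\infty))=0$ with $s(\infty)<s(0)$, the limit $s(\infty)$ must be this root, lying in $(0,\frac{c}{b})$.

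Finally, for the dynamics of $i$, I would rewrite the second equation as $\frac{di}{dt}=(b\,s-c)\,i$, so that the sign of $\frac{di}{dt}$ equals the sign of $s-\frac{c}{b}$. The hypothesis $T_t\geqslant s(0)\,T_e$ is exactly $s(0)\leqslant\frac{c}{b}$; since $s$ is decreasing, $s(t)\leqslant\frac{c}{b}$ and thus $\frac{di}{dt}\leqslant 0$ for every $t\geqslant 0$, so $i$ decreases monotonically, and by the first paragraph it decreases to $0$. In the complementary case $s(0)>\frac{c}{b}>s(\infty)$, the continuity and strict monotonicity of $s$ give a unique instant $t^{*}$ with $s(t^{*})=\frac{c}{b}$; then $\frac{di}{dt}>0$ on $[0,t^{*})$ and $\frac{di}{dt}<0$ on $(t^{*},\infty)$, so $i$ rises to its maximum at $t^{*}$ and afterwards decreases to $0$. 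To evaluate the maximum I would use the first integral of the $(s,i)$-subsystem: from $\frac{di}{ds}=-1+\frac{1}{R_0\,s}$ the quantity $i+s-\frac{1}{R_0}\ln s$ is constant along trajectories, whence $i(t)=1-r(0)-s(t)+\frac{1}{R_0}\ln\frac{s(t)}{s(0)}$ (equivalently, combine $i=1-s-r$ with the $s$--$r$ relation above); substituting $s(t^{*})=\frac{c}{b}$ and re-expressing $\frac{c}{b}$ and $\frac{b}{c}$ through $T_t$ and $T_e$ gives the peak value claimed in the statement.

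The step I expect to be the genuine obstacle is the middle one: one must rigorously justify the passage to the limit --- in particular the equality $i(\infty)=0$, which cannot merely be assumed --- and then carry out the sign-and-monotonicity study of $g$ precisely enough to certify both that the transcendental equation has a unique relevant root and that this root really is $s(\infty)$, discarding the spurious second solution of $g(x)=0$ that sits on its decreasing branch. Everything else amounts to routine manipulation of the two conserved quantities $s\,e^{R_0 r}$ and $i+s-\frac{1}{R_0}\ln s$, both already implicit in the preceding discussion.
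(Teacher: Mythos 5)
Your argument is correct and follows essentially the same route as the paper's own proof --- positive invariance of the triangle, monotonicity of $s$ and $r$, existence of the three limits, $i(\infty)=0$ by the contradiction $r(\infty)=\infty$, and then the first integral coming from $\dfrac{ds}{di}=\dfrac{c}{bs}-1$ --- except that you carry out in full the final computation (uniqueness of the root of $g$, location of the peak of $i$, positivity of $s(\infty)$ via $s(\infty)\geqslant s(0)e^{-R_0}$) that the paper compresses into a single closing sentence. One caveat: your derivation correctly produces the coefficient $\dfrac{1}{R_0}=\dfrac{c}{b}=\dfrac{T_t}{T_e}$ and the interval $\left(0,\dfrac{T_t}{T_e}\right)$, whereas the proposition as printed writes $\dfrac{T_e}{T_t}=\dfrac{b}{c}$ in both places; with the paper's definitions $T_t=b^{-1}$ and $T_e=c^{-1}$ the printed form is a slip (for $R_0>1$ the transcendental equation generally has a second root below $\dfrac{T_e}{T_t}$, so uniqueness would even fail there), so you should not assert that your equation matches the statement ``exactly'' --- it matches the corrected statement, and the same inversion affects the printed peak value $1-r(0)-\dfrac{T_e}{T_t}\left(1+\ln\dfrac{s(0)T_t}{T_e}\right)$, whose correct form is $1-r(0)-\dfrac{T_t}{T_e}\left(1+\ln\dfrac{s(0)T_e}{T_t}\right)$.
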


\begin{proof}
The triangle $T = \left\{(s,i) \mid s \geqslant 0, i \geqslant 0, s+i \leqslant 1\right\}$ is positively invariant, since from the SIR equations, it holds:
$s=0 \Rightarrow s'=0$, $i=0 \Rightarrow i'=0$, and 
$s+i=1 \Rightarrow (s+i)'= -ci \leqslant 0$.
Furthermore, points on the $s$ axis where $i=0$ are equilibrium ones, unstable
for $s>1/R_0$ and stable otherwise.
$s$ is decreasing and positive due to this invariance and 
because $\dfrac{ds}{dt} = - b i s$, so an unique limit $s(\infty)$ exists.
Similarly, $r'(t)=ci \geqslant 0$ and $r\leqslant 1$ then $r(\infty)$
exists. As $s+i+r=1$, $i(\infty)$ exists too. To prove that this limit is null,
we only remark that if $i(\infty)>0$, then $r(\infty)=\infty$ (because
$r'>\dfrac{c i(\infty)}{2}$ for sufficiently large $t$), which is impossible,
as $r \leqslant 1$. Finally, the equations of the proposition
are derived from $\dfrac{ds}{di} = \dfrac{c}{bs}-1$.
\end{proof}

The phase space of the solutions of the SIR system with given parameters is provided 
in Figure~\ref{img1}
while the evolution of $s$ and $i$ is depicted in Figure~\ref{img2}.
\begin{figure}[ht]
\centering
\includegraphics[scale=0.6]{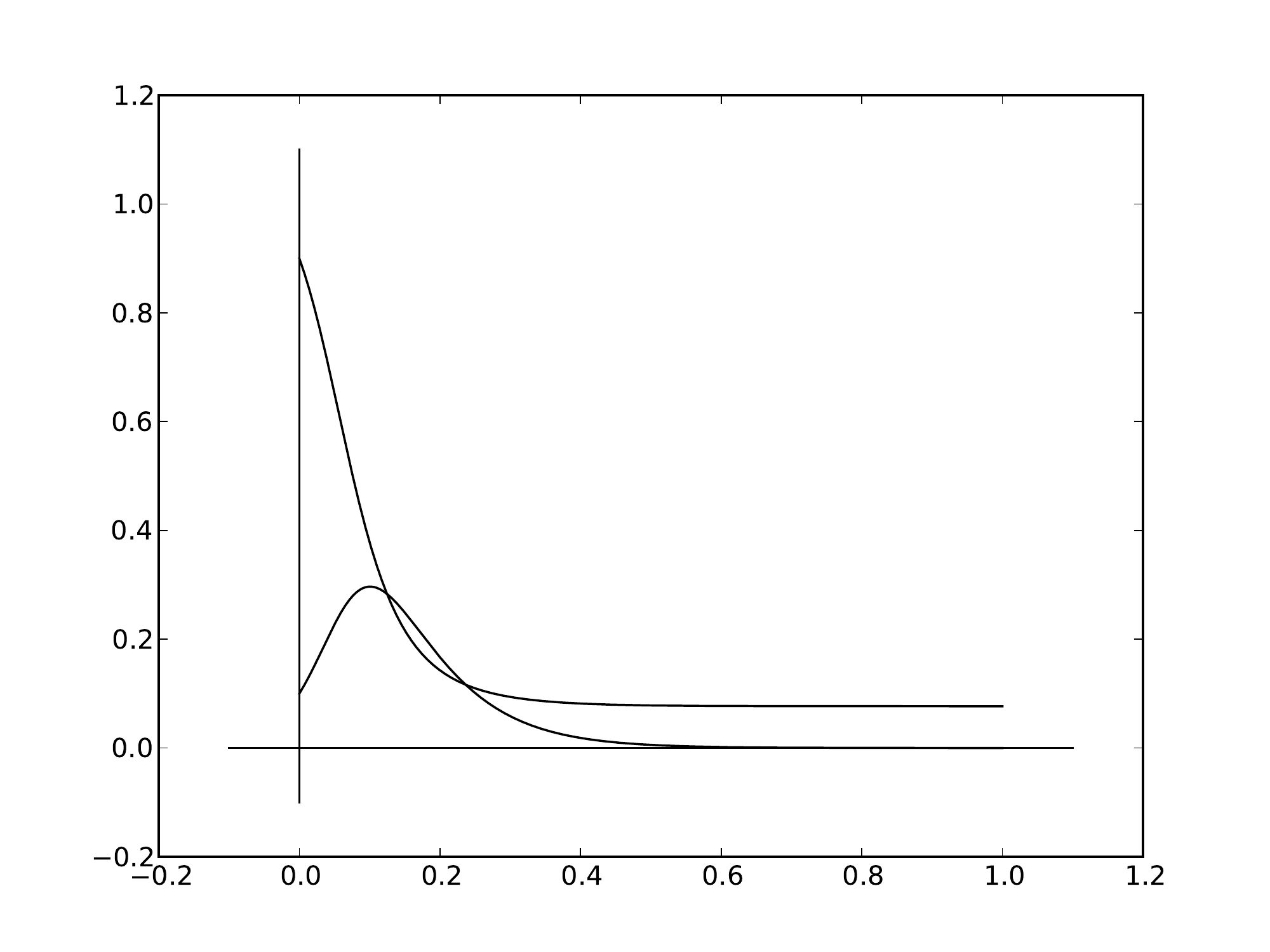}
\caption{Evolution of the fractions $s$ and $i$ of susceptible and having the datum 
sensors with $b=0.4, c=0.15, s(0)=0.9,$ and $i(0)=0.1$  (SIR model).}
\label{img2}
\end{figure}
The results presented in this section hold for a
 transition rate between susceptible and informed sensors
 having the form $F=ai$, which thus represents the force of information.
Nonlinear forces of information, or infection, can be investigated too, to model
more realistically the information survivability.

\subsection{Another understandings for the Recovered compartment}

In the previous section, the $R$ compartment was constituted by 
sensors  that have been compromised by the attacker, which will
be referred in what follows as situation 1.
It is possible to attribute at least two other understandings to
this compartment, for an unattended wireless sensor network 
whose lifetime is dependent on energy consumption and in absence of attacks.

This compartment can be constituted by dead sensors, when considering 
that the sole action on the energy is the information transmission, and
that the unique way to death for a sensor is to have too much transmitted the datum. In other
words, in this Situation 2, sensors send information messages to their neighbors until emptying
totally their batteries. The sink will receive the information when it
will interrogate the network at time $t$ if $I(t) \neq 0$.

A third situation can be considered without any changes in formalization,
except redefining the meaning of the $R$ compartment. Indeed, it can be
interesting to consider that a sensor is first susceptible to receive an information message
for a while, then in a second time it has and transmit the information, before 
finally entering into the third age of its life, the recovered state in
which it will lose its ability to transmit the information. Materials of the
previous section tackles too this scenario, when considering the network
lifetime sufficiently large compared to information spreading, in order 
to neglect sensors' death due to energy consumption. The question here
is to determine the quantity of informed sensors on large timescales.

Let us now explain how to extend such a compartmental study for data
survivability in wireless sensor networks to well-known SIS models.

\subsection{A few words about SIS models}

Other compartmental divisions of the set of sensors can be investigating,
leading for instance to a SIS epidemic model~\cite{Kermack27}. This latter 
assumes only two compartments named Susceptible (S) and
Infected (I). Transitions between these compartments are
represented in Figure~\ref{SISmodel}. An individual that is susceptible
to a disease becomes infected with a certain probability $a$, 
while an infected individual immediately becomes 
susceptible once (and if) it is cured of an infection (which
happens with probability $b$). Note that a healthy individual
can contract a disease only if it is in contact with a sick one.
Thus, the evolution of this system is completely described
by the following two differential equations (total sensor 
population: $P=S+I=S_0+I_0$, which is a constant).

$$\left\{
\begin{array}{ll}
\dfrac{dI}{dt} = a S I - b I & I(0) = I_0\\\\
\dfrac{dS}{dt} = b I - a S I& S(0) = S_0\\
\end{array}
\right.$$

The SIS model may be treated the same as the SIR model, which 
has been detailed in this section. For the sake of concision, 
and as this study does not raise any complication, this model
will be left as an exercise, while energy consumption will 
now be investigated in the next section.

\begin{figure}[ht]
\centering
\includegraphics{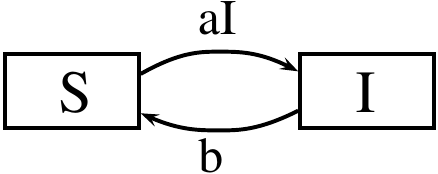}
\caption{SIS model}
\label{SISmodel}
\end{figure}

\section{Considering Energy Consumption for Data Survivability in UWSNs}
\label{USIR}

In a large amount of situations, energy consumption and the
death of sensors cannot be neglected, this is why a ``natural'' death rate for all
compartments is introduced in this section. Such an approach generalizes
the models presented previously.

\subsection{A SIR model with natural death rate}

\begin{figure}[ht]
\centering
\subfigure[Situation 2]{\label{SIR2sit2}
\includegraphics[scale=0.65]{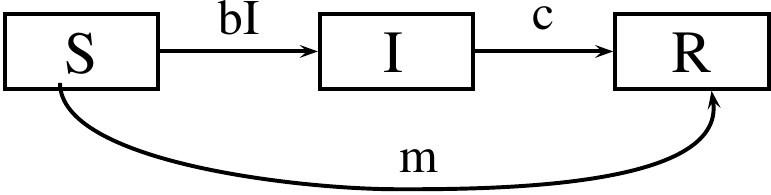}}\hspace{1cm}\subfigure[Situations 1 and 3]{\label{SIR2sit13}
\includegraphics[scale=0.65]{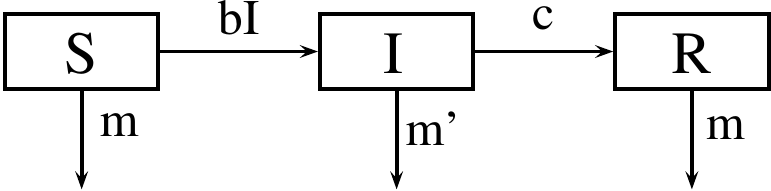}} 
\caption{SIR models with natural death rate}
\label{SIR2modeldeath}
\end{figure}

The previous section considers that all sensor activities are negligible, in terms of 
energy, except the transmission of information in situations 2 and 3, which is reasonable in a
first approximation. It is however possible to refine 
the SIR model in these two last situations, in order to 
consider that sensors' energy decreases too in absence of information transmission.

\begin{figure}[ht]
\centering
\subfigure[Situation 2]{\includegraphics[scale=0.35]{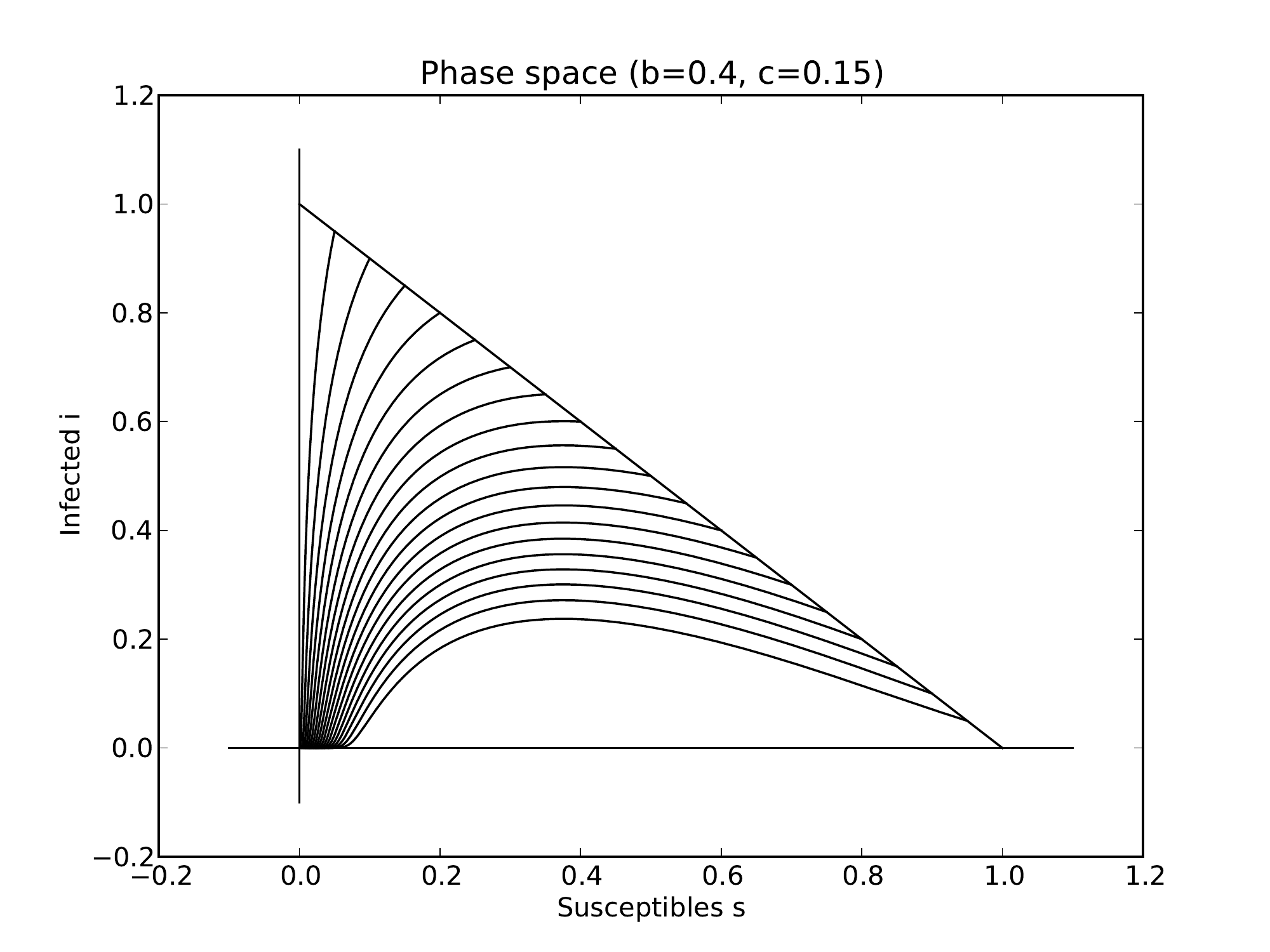}}\subfigure[Situations 1 and 3]{\includegraphics[scale=0.35]{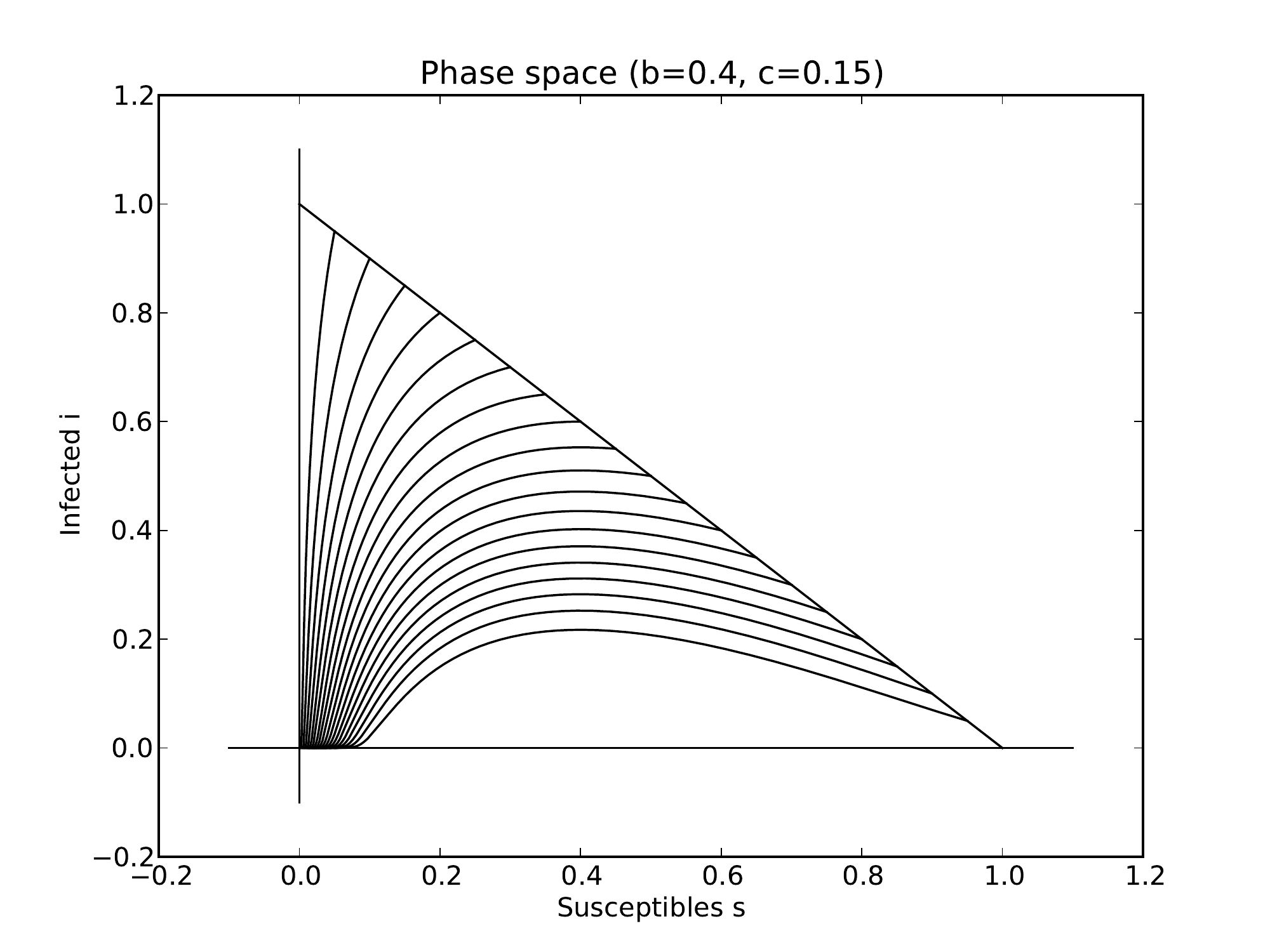}}
\caption{Phase space $(s,i)$ with $b=0.4, c=0.15, m=0.01$, SIR model with natural death rate in Situation 3.}
\label{img1}
\end{figure}

In Situation 2, the $R$ compartment of the SIR model is constituted by 
dead sensors. This compartment is populated by susceptible nodes that have
naturally died (death rate $m$) without having received the datum and by
sensors of the $I$ compartment which die at another rate $c$ supposed
to be greater than $m$, as they have to transfer the datum, an energy-consuming
task. This situation is depicted in Figure~\ref{SIR2sit2}.

In the two other situations investigated in this research work, the $R$ 
compartment is constituted by living sensors that do not transmit the
datum anymore, either because they have been corrupted and thus have 
lost it (first situation), or because their batteries is preserved (third one).
This new situation is closed to the SIR model of Figure~\ref{SIRmodel}, except
that a the new network is characterized by a death rate for each sensors
compartment (see Figure~\ref{SIR2sit13}). 
Notice that the death rate $m'$ of the $I$ compartment is \emph{a priori} different
from the one of $S$ and $R$ compartments, as it is reasonable to suppose that
the datum transmission implies more energy consumption. However, setting $m'=m$
is possible too.

The SIR model of Equation~\eqref{modelSir1} can be adapted as follows for Situation 2:
\begin{equation}
\label{modelSir2}
\left\{
\begin{array}{l}
\dfrac{ds}{dt} = - b i s -m s\\\\
\dfrac{di}{dt} = b i s - c i\\\\
\dfrac{dr}{dt} = c i+ms,\\
\end{array}
\right.
\end{equation}
while it has the following form in Situations 1 and 3:
\begin{equation}
\label{eqSit3}
\left\{
\begin{array}{l}
\dfrac{ds}{dt} = - b i s -m s\\\\
\dfrac{di}{dt} = b i s - c i - m' i\\\\
\dfrac{dr}{dt} = c i - mr.\\
\end{array}
\right.
\end{equation}

Let us now investigate the long-term behavior of these models.
Regarding Situation 2, it is natural to think that, for large timescales, all sensors will 
take place in the third $R$ compartment of died sensors, as all the batteries are
continually emptied (either due to natural consumption or because of the information transmission).
This can be easily proven by considering that in an equilibrium point $(s^*,i^*,r^*=1-s^*-i^*)$, we have $\dfrac{ds}{dt} = \dfrac{di}{dt} = \dfrac{dr}{dt} = 0$, and so
$$
\left\{
\begin{array}{l}
(bi^*+m)s^* = 0\\
(bs^*-c)i^*=0\\
ci^*+ms^*=0.\\
\end{array}
\right.
$$
As $c>0, m>0, i^*\geqslant 0$, and $s^*\geqslant 0$, we can conclude from the third
equation above that $s^*=i^*=0$, and so $r^*=1$.
The Jacobian is equal to 
$$J(s,i,r)=\left( 
\begin{array}{ccc}
-bi-m & -bs & 0\\
0 & bs-c & 0 \\
m & c & 0
\end{array}
\right)$$
and its characteristic polynomial in $(0,0,1)$ is $\lambda (\lambda + c) (\lambda +m)$. The 
eigenvalues being negative, the equilibrium $(0,0,1)$ is attractive.
These results are summarized in the following proposition.

\begin{figure}[ht]
\centering
\includegraphics[scale=0.6]{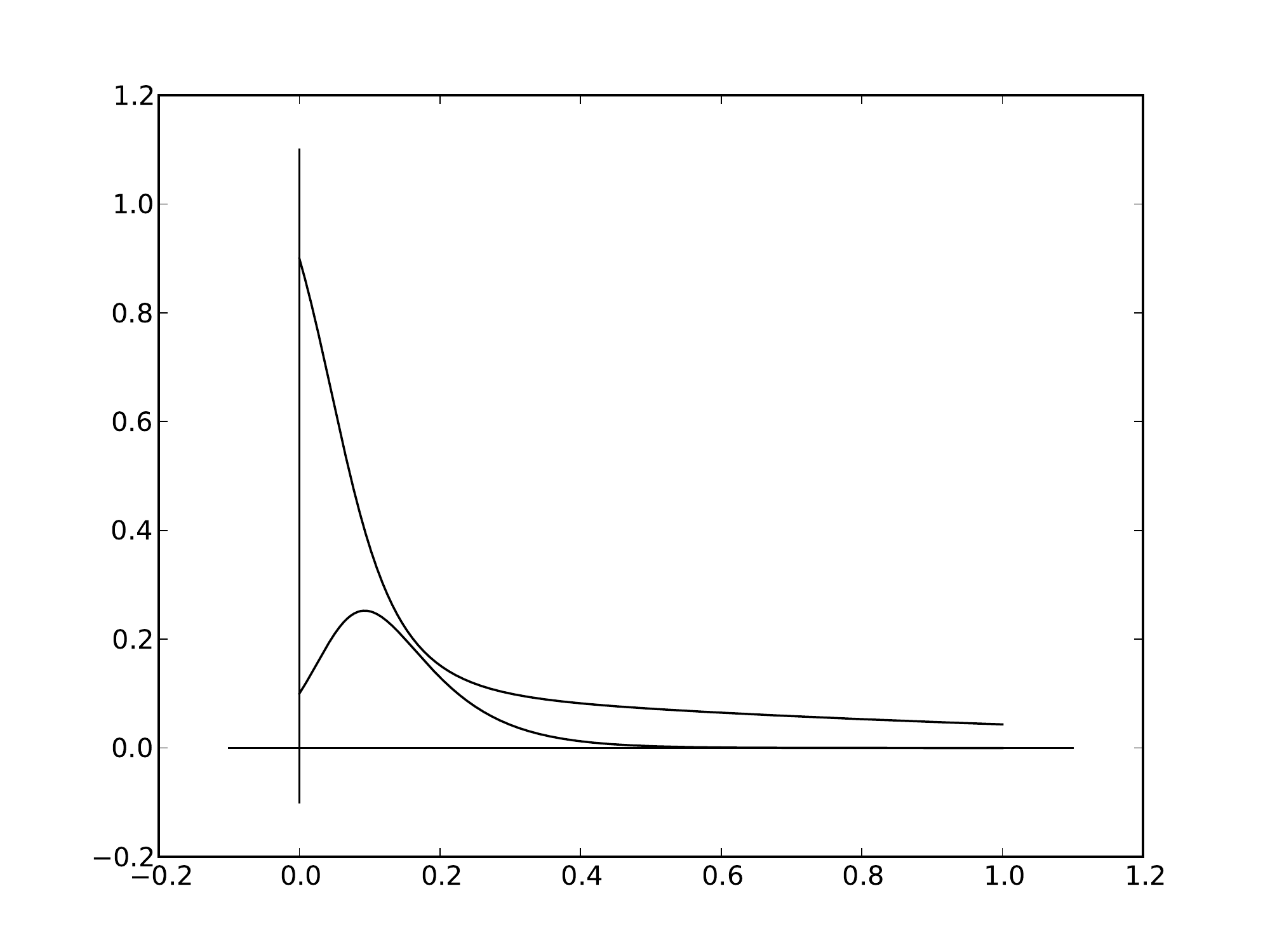}
\caption{Evolution of the fractions $s$ and $i$ of susceptible and having the datum 
sensors with $b=0.4, c=0.15, m=0.01, s(0)=0.9,$ and $i(0)=0.1$, SIR model with natural death rate in Situations 1 and 3.}
\label{img2}
\end{figure}

\begin{proposition}
Consider an unattended wireless sensor network divided in three sets of sensors,
the first category $S$ being susceptible to receive a given datum, the second one 
$I$ having and transmitting this latter, and the third one $R$ being constituted
by dead sensors. 

Suppose that the death rate is $m$ for $S$ compartment and $c$ for $I$'s one, and
that the transmission rate is $bI$ between $S$ and $I$. In that situation, for all
initial condition and for all positive parameters $b, c,$ and $m$, the system is
convergent to the equilibrium point $(0,0,1)$.

In particular, in that situation, the datum cannot survive a long time in the UWSN.
\end{proposition}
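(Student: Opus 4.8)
The plan is to establish the global stability of the equilibrium $(0,0,1)$ for system~\eqref{modelSir2} in the positively invariant triangle $T=\{(s,i)\mid s\geqslant 0, i\geqslant 0, s+i\leqslant 1\}$ (the invariance of $T$ being checked exactly as in the proof of the previous proposition, noting that on $s+i=1$ we get $(s+i)'=-ci-ms\leqslant 0$). The local analysis is already done in the paragraph preceding the statement: the equilibrium computation forces $s^*=i^*=0$, hence $r^*=1$, and the Jacobian at $(0,0,1)$ has characteristic polynomial $\lambda(\lambda+c)(\lambda+m)$, so the two non-trivial eigenvalues are $-c<0$ and $-m<0$. What remains is to upgrade this local attractivity to convergence from every initial condition in $T$.

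First I would note that $s(t)$ is monotone: since $\frac{ds}{dt}=-(bi+m)s\leqslant -ms$, we get $0\leqslant s(t)\leqslant s(0)e^{-mt}$, so $s(t)\to 0$ exponentially regardless of the behaviour of $i$. This is the key simplification that the natural death rate buys us compared to the plain SIR model. Next I would handle $i(t)$: from $\frac{di}{dt}=(bs-c)i$, once $t$ is large enough that $bs(t)<c/2$ (which happens in finite time because $s(t)\to 0$), we have $\frac{di}{dt}\leqslant -\frac{c}{2}i$, hence $i(t)\to 0$ as well, again exponentially. Finally, since $r=1-s-i$ on the invariant set (using $N=S+I+R$ constant, i.e. $s+i+r=1$), we conclude $r(t)\to 1$. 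This proves global convergence to $(0,0,1)$.

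For the very last sentence of the proposition — that the datum cannot survive long in the UWSN — I would simply observe that $i(t)\to 0$ means the fraction of sensors still holding the datum vanishes, and moreover the decay is eventually exponential with rate at least $c/2$ (or more precisely governed by the dominant eigenvalue $-\min(c\, \text{-related decay})$), so after a time of order $1/m$ (for $s$ to become negligible) plus a further time of order $1/c$, essentially no sensor holds the datum; if the sink has not visited by then, the information is irrecoverably lost.

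There is no real obstacle here: the argument is genuinely easy because the linear bound $s(t)\leqslant s(0)e^{-mt}$ decouples the system asymptotically, and unlike the classical SIR threshold phenomenon there is no case distinction on $R_0$ — the death term $-ms$ destroys the susceptible pool unconditionally. The only point requiring a little care is making the "finite time until $bs(t)<c/2$" step rigorous, but this is immediate from the explicit exponential bound on $s$. One could alternatively invoke a LaSalle-type argument with Lyapunov function $V=s+i$ (whose derivative along trajectories is $-ms-ci\leqslant -\min(m,c)\,V$), which gives $V(t)\to 0$ directly and hence both $s,i\to 0$ in one stroke; I would probably present this Lyapunov version as it is the shortest.
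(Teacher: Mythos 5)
Your proof is correct, but it takes a genuinely different route from the paper's. The paper's argument (given in the paragraph preceding the proposition) consists of two steps: solving the equilibrium equations to show that $(0,0,1)$ is the unique equilibrium, and then computing the Jacobian, whose characteristic polynomial at that point is $\lambda(\lambda+c)(\lambda+m)$, to conclude that the equilibrium is attractive. That is purely a local linearization argument — and a slightly delicate one, since one eigenvalue is $0$ (it corresponds to the conserved direction $s+i+r=1$, so the conclusion survives on the invariant plane, but the paper does not say so). Your approach instead proves global convergence directly: the differential inequality $s'\leqslant -ms$ gives $s(t)\leqslant s(0)e^{-mt}$, which after finite time forces $i'\leqslant -\tfrac{c}{2}i$, hence $s,i\to 0$ and $r\to 1$ from \emph{every} initial condition in the invariant triangle; your Lyapunov variant $V=s+i$ with $V'=-ms-ci\leqslant -\min(m,c)V$ packages this in one line. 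This is a real gain: the proposition asserts convergence ``for all initial condition,'' which the paper's linearization only establishes locally, whereas your exponential bounds deliver the global statement (with an explicit decay rate as a bonus) at no extra cost. The only things you inherit from the paper rather than redo are the equilibrium computation and the invariance of $T$, both of which you correctly adapt ($(s+i)'=-ci-ms\leqslant 0$ on $s+i=1$).
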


Equation~\ref{eqSit3} can be resolved similarly: from $bi^*s^*+ms^*$, 
we deduce that $s^*=0$ (as b>0, m>0, and $i^* \geqslant 0$). So
$bi^*s^*-ci^*-m'i^*$ implies that $i^*=0$ too. Finally, from 
the third line, we conclude that $r^*=0$.
Eigenvalues of the characteristic polynomial of the Jacobian
in $(0,0,0)$ are $-m$ and $-c-m'$, which are negative. So this
equilibrium point is attractive too, and a similar proposition 
than previously can be formulated, with the same conclusion, both
for Situations 1 and 3. Phase spaces for the three situations
are provided in Figure~\ref{img1} while Fig.~\ref{img2}
depicts the evolution of the fractions $s$ and $i$ in Situations
1 and 3.

To put it in a nutshell, to achieve data survivability in 
UWSNs, the birth of awaken sensors must be considered, which 
is the subject of the next subsection.

\subsection{A scheduling process in data survivability}

\subsubsection{A first natural approach}

A first idea to realize a more realistic model of an unattended 
wireless sensor network is to establish a scheduling process
of the sensor nodes, in order to enhance data survivability for a long period of time.
Considering the SIR model, such a process
leads to the division of each compartment in two parts, corresponding
respectively to awaken and to sleeping sensors, as depicted in
Figure~\ref{momodelSir}.

\begin{figure}[ht]
\centering
\includegraphics{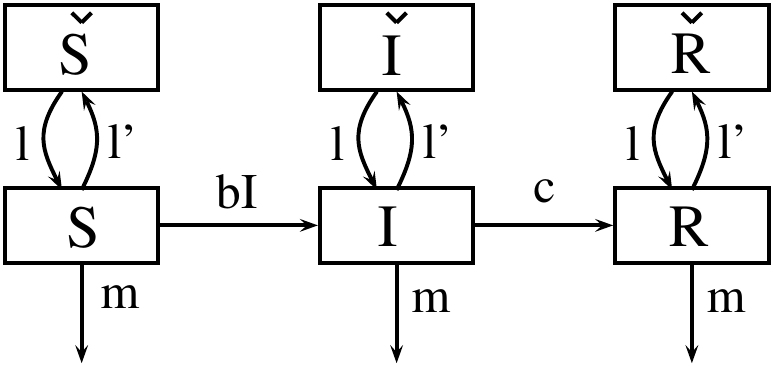}
\caption{SIR model with natural death rate and sleeping nodes}
\label{momodelSir}
\end{figure}

Such a model can be reformulated as follows:
\begin{equation}
\label{modelSir3}
\left\{
\begin{array}{ll}
\dfrac{ds}{dt} = l\check{s}-l's - b i s -m s & 
\dfrac{d\check{s}}{dt} = -l\check{s}+l's \\\\
\dfrac{di}{dt} = l\check{i}-l'i+b i s - c i - m i & 
\dfrac{d\check{i}}{dt} = -l\check{i}+l'i \\\\
\dfrac{dr}{dt} = l\check{r}-l'r+c i - m r & 
\dfrac{d\check{r}}{dt} = -l\check{r}+l'r. \\
\end{array}
\right.
\end{equation}

The equilibrium point $(s^*,\check{s}^*,i^*,\check{i}^*,r^*,\check{r}^*)$ is searched once again, it satisfies:
\begin{equation}
\left\{
\begin{array}{ll}
l\check{s}^*-l's^* - b i^* s^* -m s^* = 0 & 
-l\check{s}^*+l's^* = 0\\\\
l\check{i}^*-l'i^*+b i^* s^* - c i^* - m i^* = 0 & 
-l\check{i}^*+l'i^* = 0\\\\
l\check{r}^*-l'r^*+c i^* - m r^* = 0 & 
 -l\check{r}^*+l'r^* = 0. \\
\end{array}
\right.
\end{equation}
Obviously, $l\check{s}^*=l's^*$, $l\check{i}^*=l'i^*$, and $l\check{r}^*=l'r^*$, and so:
\begin{equation}
\left\{
\begin{array}{l}
(b i^* + m) s^* = 0\\\\
(b s^* - c - m) i^* = 0\\\\
c i^* - m r^* = 0.\\
\end{array}
\right.
\end{equation}
If $s^* \neq 0$, then $b i^* + m = 0$, which is impossible 
if it is reasonably supposed that each rate is $>0$. So $s^*=0$,
which implies that $i^*=0$, and so $r^*=0=\check{r}^*=\check{i}^*=\check{s}^*$.

To sum up, in the unique stable equilibrium point, the number of informed
sensors is null, and we face a data loss. This problem is solved in the
next section, by considering that nodes never go to sleep.

\subsubsection{Achieving data survivability using birth and death rates}

\begin{figure}[ht]
\centering
\includegraphics{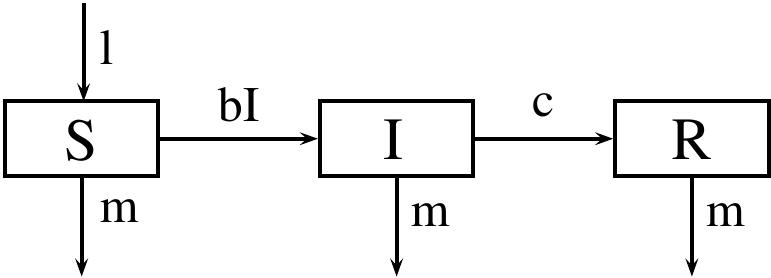}
\caption{SIR model with natural birth and death rates}
\label{SIR3modelbirthdeath}
\end{figure}

Consider now a new approach proposed to solve
the loss of information in the former scheduling process.
In this second approach for scheduling, sensors can only be
awaken (we never order them to sleep). It is supposed that
a sufficiently large number of sensors are available, and the question
is to determine if it is possible to determine the lowest 
birth rate to achieve data survivability for a long period of time 
, even in presence of an adversary.

To do so, it is supposed that, at the initial stage, only a small part of 
the sensors nodes is awakened.
New sensors are then awakened periodically during the 
network's service
at a rate $l$, repopulating by doing so the $S$
compartment (they never go to sleep). 
Along with this birth rate, a natural death rate $m$ is considered 
for each of the
three kind of sensors, while the $R$ compartment is for corrupted 
sensors in the original situation 1, 
as depicted in Figure~\ref{SIR3modelbirthdeath}. 
Remark that such a model is compatible with living and awaken
nodes that have stopped to transfer the information in Situation 3.

To model such a scenario requires to rewrite the first line of 
Equation~\eqref{modelSir2}, leading to the following system:
\begin{equation}
\label{modelSir3}
\left\{
\begin{array}{l}
\dfrac{ds}{dt} = l - b i s -m s\\\\
\dfrac{di}{dt} = b i s - c i - m i\\\\
\dfrac{dr}{dt} = c i - m r.\\
\end{array}
\right.
\end{equation}

%
%
This updated system is the usual SIR model with vital dynamics, in which
we have not supposed the birth and death rates equal.
It is possible to show that the problem is well formulated, as
the triangle 
$T = \left\{(s,i) \mid s \geqslant 0, i \geqslant 0, s+i \leqslant 1\right\}$
 still remains positively invariant.

A study of this system supposes to consider the 
Poincar\'{e}-Bendixon theorem in phase space and the use of
Lyapunov functions~\cite{Hethcote00}. It can however be 
understood by considering what will happen to the information in a long run: 
will it die out or will it establish itself in the network like an
endemic situation in epidemiological models?
The long-term behavior of the solutions, 
which depends largely on the equilibrium points
that are time-independent solutions of the system,
must be investigated to answer this question.
Since these solutions do not depend on time, we have $s'(t)=i'(t)=r'(t)=0$,
which leads to the system:
$$\left\{
\begin{array}{l}
0 = l - b i s -m s\\\\
0 = b i s - (c+m) i\\\\
0 = c i -m r.\\
\end{array}
\right.$$
$r=\dfrac{c}{m}i$ from the last equation, and either $i=0$ or $s=\dfrac{c+m}{b}$
from the second one. 
On the one hand, if $i=0$, then $r=0$, and $s=\dfrac{l}{m}$ 
from the first equation. This leads to the equilibrium solution $$\left(\dfrac{l}{m},0,0\right).$$
As the number of sensors having the datum is 0 in this point, it means that
if a solution of the system approaches this equilibrium, the fraction $i$ will
approach 0, and the datum tends to disappear from the network: an \emph{information-free equilibrium}. 
Remark that the existence of this equilibrium is independent of the parameters of the system: it always exists.

On the other hand, if $i\neq 0$, then $s=\dfrac{c+m}{b} \neq 0$ from the second equation, and $\dfrac{l}{s}=bi + m$ according to the first equation. Substituting
$s$ and solving for $i$, we find $$i=\dfrac{bl-m(c+m)}{b(c+m)}=\dfrac{R_0l-m}{b},$$
with $R_0=\dfrac{bl}{m(c+m)}$, which is a positive number iff $R_0>1$.

$R_0$ is the reproduction number of the information, which
tells us how many secondary informed sensors
will one informed sensor produces in an entirely susceptible network, as:
\begin{itemize}
\item a network which consists of only susceptible nodes in a long run has
$\dfrac{l}{m}$ sensors;
\item $c+m$ is the rate at which sensors leave the $I$ compartment. In
other words, the average time spent as an informed sensor is $\dfrac{1}{c+m}$
time units.
\item The number of data transmissions per unit of time is given by the 
incidence rate $bIS$. If
there is only one informed sensor ($I = 1$) and every other sensor is 
susceptible $(S = \dfrac{l}{m}$) then the number of
transmissions by one ``infected'' node per unit of time is $\dfrac{bl}{m}$.
\end{itemize}
So the number of data transmissions that one informed sensor can achieve during the
entire time it is not attacked if all the reminded sensors are susceptible, is
$\dfrac{bl}{m(c+m)}$, that is, $R_0$.

So if $R_0>1$, the number of sensors having the datum is strictly positive in this equilibrium
solution: if some other solutions of the system approach this equilibrium as time
goes large, the number of sensors having the datum will remain strictly positive, and
the information remains in the network and becomes endemic.

These statements are summarized in the following proposition.
\begin{proposition}
\label{laprop}
If either $R_0\leqslant 1$ or $s(0)=0$, then any solution $(s(t),i(t))$
is convergent to the equilibrium without information $(1,0)$. 

If $R_0>1$ then there are two equilibria: the non attractive information-free 
equilibrium and the endemic equilibrium. This latter is attractive
so that solutions of the ODE system approach it as time goes to infinity: the
information remains endemic in the UWSN.
\end{proposition}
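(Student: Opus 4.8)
The plan is to reduce the SIR model with vital dynamics of the statement to the planar system in $(s,i)$ and then to combine a Bendixson--Dulac argument with the Poincar\'e--Bendixson theorem, as announced just before the statement. First I would observe that $r$ does not occur in the equations for $s$ and $i$, so it suffices to analyse
$$
s' = l - bis - ms, \qquad i' = (bs-c-m)\,i,
$$
the $r$--component being recovered afterwards from the linear equation $r'=ci-mr$, which forces $r(t)\to \tfrac{c}{m}\,i(\infty)$. The preliminary step is well-posedness: I would check that $\Omega=\{(s,i):s\ge0,\ i\ge0,\ s+i\le l/m\}$ is positively invariant --- on $\{s=0\}$, $s'=l>0$; on $\{i=0\}$, $i'=0$; on $\{s+i=l/m\}$, $(s+i)'=-ci\le0$ --- and forward-absorbing, since $(s+i)'\le l-m(s+i)$ gives $\limsup_{t\to\infty}(s(t)+i(t))\le l/m$. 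Hence all solutions are bounded, global in forward time, and have their $\omega$--limit sets in $\Omega$.

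Next I would record the equilibria found in the excerpt: the information-free point $E_0=(l/m,0)$, present for all parameters, and the endemic point $E^{*}=\bigl(\tfrac{c+m}{b},\ \tfrac{bl-m(c+m)}{b(c+m)}\bigr)$, which lies in $\Omega$ with $i^{*}>0$ exactly when $R_0>1$ (and coincides with $E_0$ when $R_0=1$). The Jacobian is $J(s,i)=\left(\begin{smallmatrix}-bi-m & -bs\\ bi & bs-c-m\end{smallmatrix}\right)$; at $E_0$ its eigenvalues are $-m$ and $(c+m)(R_0-1)$, so $E_0$ is asymptotically stable for $R_0<1$ and a saddle for $R_0>1$ whose stable manifold is exactly the invariant axis $\{i=0\}$ (the stable eigenvector is $(1,0)$ and $\{i=0\}$ is invariant). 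At $E^{*}$, using $bs^{*}=c+m$, one gets $\operatorname{tr}J=-bi^{*}-m<0$ and $\det J=b^{2}s^{*}i^{*}>0$, so $E^{*}$ is asymptotically stable whenever feasible. To exclude recurrence I would apply the Bendixson--Dulac criterion on the simply connected region $\{i>0\}$ with multiplier $B(s,i)=1/i$:
$$
\frac{\partial}{\partial s}\!\left(\frac{l-bis-ms}{i}\right)+\frac{\partial}{\partial i}\bigl(bs-c-m\bigr)=\frac{-bi-m}{i}<0,
$$
so no periodic orbit, and no graphic, is contained in $\{i>0\}$.

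The assertion is then read off from the Poincar\'e--Bendixson theorem: every orbit in $\Omega$ is bounded, and, no closed orbit or cycle of connecting orbits being available, its $\omega$--limit set is a single equilibrium. If $R_0\le1$, the unique equilibrium in $\Omega$ is $E_0$ (the endemic point being infeasible for $R_0<1$ and equal to $E_0$ for $R_0=1$), so every solution converges to the information-free equilibrium; the degenerate case with no informed sensor present initially is immediate, for $\{i=0\}$ is invariant and then $s'=l-ms$ drives $s$ to $l/m$. If $R_0>1$ and $i(0)>0$, uniqueness of solutions confines the orbit to $\{i>0\}$, so its limit equilibrium lies in $\overline{\{i>0\}}\cap\Omega=\{E_0,E^{*}\}$; it cannot be $E_0$, whose stable manifold $\{i=0\}$ the orbit never meets, hence it is $E^{*}$ and the information becomes endemic.

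I expect the main obstacle to be the two \emph{soft} topological steps rather than the algebra. The first is fixing an honestly invariant and absorbing feasible region: the excerpt asserts that $\{s+i\le1\}$ is preserved, which in fact requires $l\le m$, so I would instead work with $\Omega=\{s+i\le l/m\}$, invariant and absorbing for all positive parameters. The second is showing that no interior trajectory can have $E_0$ in its $\omega$--limit set when $R_0>1$; this rests on the correct but not wholly obvious fact that the stable manifold of the saddle $E_0$ is precisely the boundary segment $\{i=0\}$, which follows from the linearisation together with the invariance of that segment. The Dulac computation and the Routh--Hurwitz check at $E^{*}$ are routine. An alternative for the $R_0>1$ case is a Goh--Volterra Lyapunov function $V=(s-s^{*}\ln s)+(i-i^{*}\ln i)$, but the phase-plane argument above is shorter and matches the tools mentioned in the paper.
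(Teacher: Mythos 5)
Your proof is correct, and it goes well beyond what the paper actually provides: the paper never writes a proof of Proposition~\ref{laprop} at all --- it only computes the two equilibria, interprets $R_0=\frac{bl}{m(c+m)}$ heuristically, and then declares the statements ``summarized'' in the proposition, explicitly deferring the dynamical analysis to Poincar\'e--Bendixson and Lyapunov techniques without carrying them out. You supply exactly the missing content: the reduction to the planar $(s,i)$ system, a genuinely invariant absorbing region, the linearization at both equilibria (correctly finding the eigenvalues $-m$ and $(c+m)(R_0-1)$ at the information-free point, and $\operatorname{tr}J<0$, $\det J>0$ at the endemic point), the Dulac multiplier $1/i$ to exclude periodic orbits, and the identification of $\{i=0\}$ as the stable manifold of the saddle so that interior orbits cannot limit onto $E_0$ when $R_0>1$. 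You also catch two genuine slips in the paper that your argument silently repairs: the claimed positive invariance of $\{s+i\leqslant 1\}$ holds only if $l\leqslant m$ (your region $\{s+i\leqslant l/m\}$ is the right one, and accordingly the information-free equilibrium is $(l/m,0)$ rather than $(1,0)$ unless $l=m$), and the hypothesis ``$s(0)=0$'' in the first clause should evidently read ``$i(0)=0$'', which is how you treat it. The only point I would tighten is the exclusion of a homoclinic graphic at $E_0$ for $R_0>1$: the loop itself touches $\{i=0\}$ at $E_0$, so the Dulac/Green argument needs the (true, but worth stating) observation that $\iint \nabla\cdot(Bf)$ diverges to $-\infty$ near that boundary point while the line integral along an invariant curve vanishes --- or, more simply, that a homoclinic loop would require an orbit in the unstable manifold of $E_0$ to return to $E_0$, which is impossible since the stable manifold is exactly the invariant axis. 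With that remark your argument is complete.
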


\begin{figure}[ht]
\centering
\includegraphics[scale=0.5]{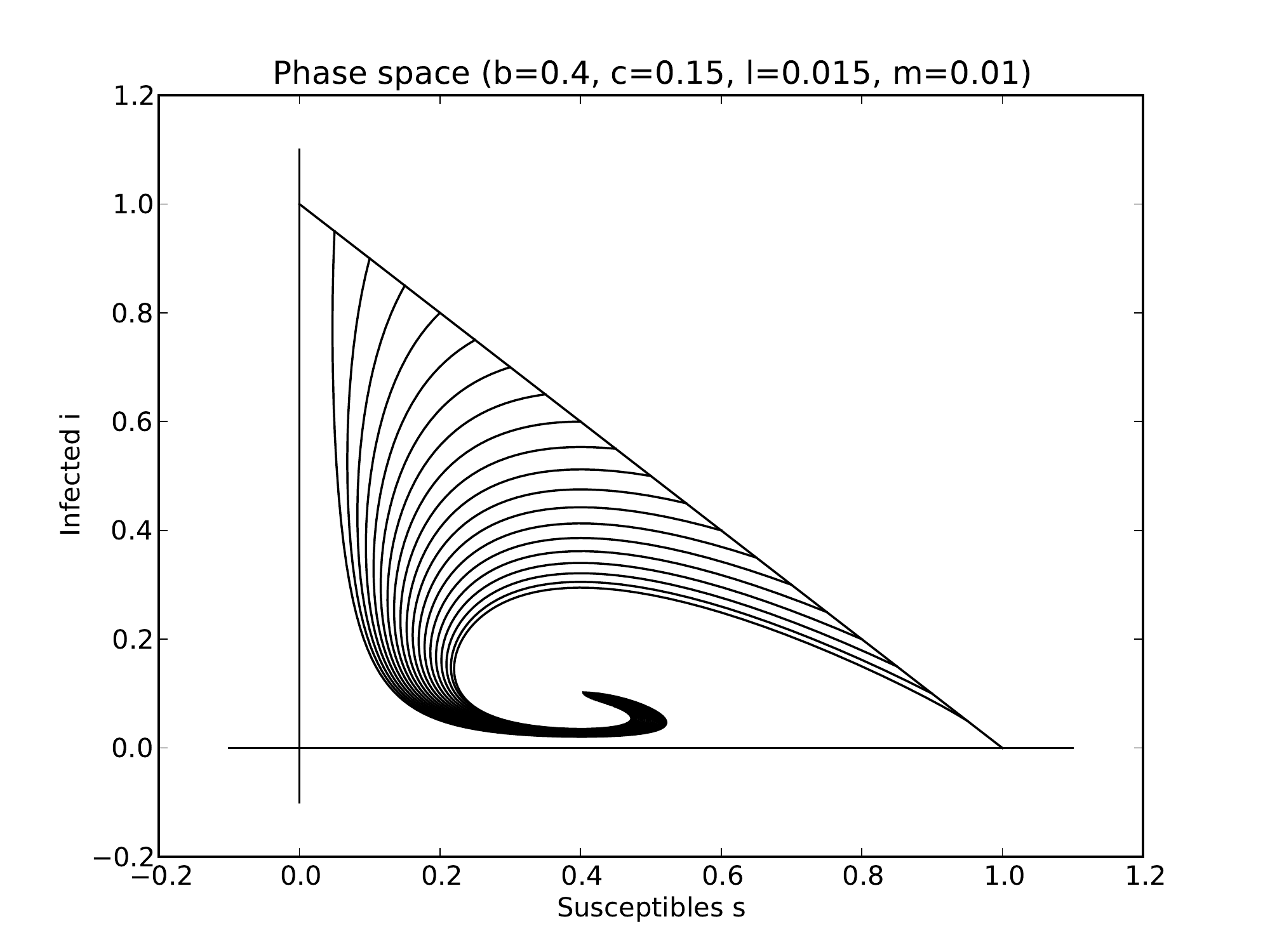}
\caption{Evolution of the fractions $s$ and $i$ of susceptible and having the datum 
sensors, SIR model with natural birth and death rates ($R_0=3.75$).}
\label{img2}
\end{figure}

The attacker desire is to have $R_0<1$ to tend to an information-free equilibrium,
whereas $R_0$ must be greater than 1 for the sink to face such attack. If the
attacker has the opportunity to observe the network running a certain duration,
then he or she can infer the values of parameters $b, c, m$, and $l$. 
Let $N$ be the number of data transmissions by one
informed node per time unit, that is, $N=\dfrac{bl}{m}$. If the attacker
is able to detect and infect the informed nodes in a time $\dfrac{1}{c+m}$ lower than $\dfrac{1}{N}$,
then he or she is sure that $R_0<1$: the data will not survive in the network.
The sink interest, for its part, is to have $\dfrac{bl}{m}$ large and $\dfrac{1}{c+m}$ low,
which can be achieved in the following manner:
\begin{itemize}
\item increasing the birth rate $b$,
\item increasing the lifetime of sensors to reduce $m$,
\item increasing the data transmission rate $b$, but $m$ increases when $b$ increases,
\item if possible, reducing $c$ by considering countermeasures against data removal.
\end{itemize}

\begin{figure}[ht]
\centering
\includegraphics{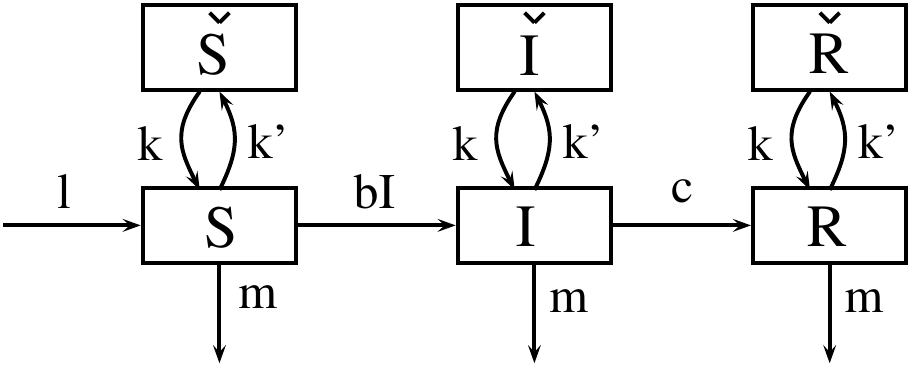}
\caption{Global SIR model with natural birth and death rates, and sleeping nodes}
\label{momodelSir2}
\end{figure}

Remark finally that this study is compatible with the situation depicted in Figure~\ref{momodelSir2},
in which awaken sensor nodes are allowed to go to sleep. Indeed this situation, which
has not been detailed in this section to avoid making the text more cumbersome,
introduces three new compartments $\check{S}, \check{I},$ and $\check{R}$ as in the
previous section. However, as we focused on the future of the information in a long run,
we only have to consider equilibrium points that are time-independent solutions of the 
system. As shown in the previous section, we obtain 
$\dfrac{d\check{s}}{dt} = -k\check{s}+k's = 0$, 
$\dfrac{d\check{i}}{dt} = -k\check{i}+k'i = 0$,
and $\dfrac{d\check{r}}{dt} = -k\check{r}+k'r$. Consequently,
compartments $\check{S}, \check{I},$ and $\check{R}$ disappear 
in the final global system corresponding to figure~\ref{momodelSir2}:
\begin{equation}
\label{modelSir3}
\left\{
\begin{array}{ll}
\dfrac{ds}{dt} = l +k\check{s}-k's- b i s -m s & 
\dfrac{d\check{s}}{dt} = -k\check{s}+k's \\\\
\dfrac{di}{dt} =  k\check{i}-k'i+ b i s - c i - m i &
\dfrac{d\check{i}}{dt} = -k\check{i}+k'i \\\\
\dfrac{dr}{dt} =  +k\check{r}-k'r+c i - m r & 
\dfrac{d\check{r}}{dt} = -k\check{r}+k'r, \\
\end{array}
\right.
\end{equation}
and exactly the same Proposition~\ref{laprop} is obtained.

\section{The proposed algorithm}
\label{sec:algo}

In this section, a fully distributed algorithm which supports/covers different epidemic models is presented and theoretically analyzed. Our algorithm seeks to ensure {\it data survivability} by maintaining a necessary 
set of safe working nodes and replacing/locking attacked ones when needed. 
 
In the following, we first focus on the legitimate state formulation and next, we present the
algorithm which consists in only three rules and give the correctness proofs. 

\subsection{Problem formalization}
Let $G = (V;E)$ the graph modeling the sensor network, with $|V| =n$ and $|E| = m$. We assume sensor node identifiers to be unique. Recall that sensor node identifier is unique if and only if $i.Id \neq j.Id$ holds for each $i, j \in V (i \neq j )$. A sensor node can be in one of these four states: \textit{working}, \textit{probing}, \textit{sleeping} or locked.

\smallskip

 \noindent We say that a sensor node $i$ is independent if
\begin{center}
$i.state = working \wedge \left(\forall j \in N_i\right) \left(j.state = sleeping \vee probing \vee locked \right)$
\end{center}

\noindent and that $i$ is dominated if
\begin{center}
$\left(i.state = sleeping\vee probing \vee locked \right)\wedge \left(\exists j \in N_i\right) \left(j.state = working\right)
$
\end{center}

  The legitimate state (let~denote~it~$\cal L$) of the network is then expressed as follows:
$$\forall i \in V : i.state = working \Rightarrow i.compartment = S \vee I$$
In other words, each working node is either in $S$ or $I$. 

\medskip

The following notations are also given for the predicates of node $i$\\

\noindent 
- $A(i)$: attacked neighbor: {\small $ ~\exists~j \in N_i, j.compartment = R$}

\smallskip
\noindent 
- $W(i)$: working neighbor:  {\small$ ~\exists~j \in N_i, j.state = working$}

\smallskip
\noindent 
- $ W^{*}(i)$: working neighbor with lower Id:  {\small $\exists~j \in N_i, j.state = working \wedge i.Id > j.Id$}

\smallskip
\noindent 
- $ P^{*}(i)$: probing neighbor with lower Id: {\small
$\exists~j \in N_i, j.state = probing \wedge i.Id > j.Id$}

\subsection{The algorithm}
\medskip
\noindent The proposed algorithm uses the following three rules:

\noindent
{\bf $r_1$}: 
\begin{algorithmic} 
\IF {$i.state = probing \wedge W(i)$}
\IF {$j.compartment = I$}
\STATE {$i.compartment \leftarrow I $~{\footnotesize \tt (*the datum is transferred/replicated to/on $i$*)}}
\ENDIF
\STATE {$i.state  \leftarrow sleeping$}
\ENDIF
\end{algorithmic} 

\medskip
\noindent
{\bf $r_2$}: 
\begin{algorithmic} 
\IF {$i.state = probing \wedge \left(\neg W(i)\wedge \neg P^*(i) \vee A(i)\right)$}
\IF {$A(i)$}
\STATE {$j.state \leftarrow locked$~{\footnotesize \tt (*node $j$ remains locked until its healing/recovery*)}} 
\ENDIF
\STATE {$i.state \leftarrow working$}
\ENDIF
\end{algorithmic}

\medskip
\noindent
{\bf $r_3$}: 
\begin{algorithmic} 
\IF {$ i.state = working \wedge  W^*(i)  $}
\IF {$i.compartment = S \wedge j.compartment = I$}
\STATE {$i.compartment \leftarrow I $~{\footnotesize \tt (*the datum is transferred/replicated to/on $i$*)} }
\ENDIF
\STATE {$i.state \leftarrow sleeping$}
\ENDIF
\end{algorithmic}

\subsection{Correctness proofs}
 
 \smallskip
  \begin{lemma} If a node changes to the {\it working} state by $r_2$, then it remains in its state and will never execute a rule again until an eventual attack.

\end{lemma}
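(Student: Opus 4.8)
The plan is to exploit the asymmetry among the three rules. Rules $r_1$ and $r_2$ are guarded by $i.state = probing$, so the only rule a \emph{working} node can ever execute is $r_3$, whose guard contains $W^{*}(i)$, the existence of a working neighbour with strictly smaller identifier. Moreover, inspecting the three rules shows that a node becomes \emph{working} only by executing $r_2$ \emph{on itself}: the only write a rule ever performs on a node other than the one firing it is the \emph{lock} inside $r_2$. Hence it suffices to prove that, once $i$ has run $r_2$ and entered the working state, the predicate $W^{*}(i)$ stays false forever (absent a new attack); this immediately gives that $r_3$ is never enabled at $i$, that $i$ is never locked (locking $i$ requires $i.compartment = R$, i.e.\ an attack), and therefore that $i$ stays working and idle.

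To establish this I would carry a single invariant along the execution, starting at the instant $t_0$ at which $i$ executes $r_2$: \emph{$i.state = working$ and no neighbour of $i$ with identifier smaller than $i.Id$ is working}. For the base case, note that $i$ entered the working state through one of the two disjuncts of the guard of $r_2$; the branch $A(i)$ means an attacked ($R$) neighbour was present, i.e.\ we are already in the ``eventual attack'' situation set aside by the statement, so we may assume the branch $\neg W(i)\wedge\neg P^{*}(i)$ fired. That disjunct certifies that at $t_0$ node $i$ has \emph{no} working neighbour at all, a fortiori none of smaller identifier, and executing $r_2$ modifies only $i$'s own state; so the invariant holds at $t_0$.

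For the inductive step I would examine the next rule execution. If the executing node is $i$ itself, the only candidate is $r_3$, but $W^{*}(i)$ is false by the invariant, so $r_3$ is disabled and nothing happens at $i$. If the executing node is a neighbour $j$ of $i$ with $j.Id < i.Id$, the only way $j$ could reach the working state is $r_2$ on $j$, whose guard, with $A(j)$ false (no attack), demands $\neg W(j)$; but $i\in N_j$ and $i.state = working$ by the invariant, so $W(j)$ holds and $r_2$ is disabled at $j$. Any other execution neither writes $i.state$ nor turns a smaller-identifier neighbour of $i$ working --- it can only put the firing node to sleep or lock an $R$-neighbour, and absent an attack no node is in compartment $R$. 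Thus the invariant is preserved, $W^{*}(i)$ remains permanently false, and the lemma follows.

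The delicate point is the treatment of attacks: one has to read ``until an eventual attack'' as the hypothesis that no node enters compartment $R$ after $t_0$, so that every $A(\cdot)$ predicate stays false, and to argue that the scenario in which $i$ itself becomes working through the $A(i)$ disjunct is exactly the attack case the statement excludes. A secondary point worth stating explicitly is why the induction must be run jointly on the two conjuncts rather than on ``$i$ stays working'' alone: the two facts are mutually dependent --- $i$ remaining working is precisely what blocks smaller-identifier neighbours from turning working via $r_2$, and the absence of such neighbours is precisely what keeps $r_3$ disabled at $i$ --- so only the combined invariant closes the argument.
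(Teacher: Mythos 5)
Your proof is correct and follows essentially the same route as the paper's: a working node can only ever fire $r_3$, whose guard $W^{*}(i)$ would require a neighbour to become working via $r_2$, which is blocked precisely because $i$ itself is already working. You merely make explicit what the paper leaves implicit --- the joint invariant coupling ``$i$ stays working'' with ``no smaller-Id neighbour is working,'' and the caveat that the $A(i)$ disjunct of $r_2$ must be folded into the ``eventual attack'' exclusion --- which is a more careful rendering of the same argument rather than a different one.
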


\begin{proof} Let $i$ be a sensor node that executes $r_2$. According to the preconditions of all rules, node $i$ can execute 
only rule $r_3$ in the next round. However, in order to do so, one of its neighbors would have to change 
into {\it working} state by $r_2$. This is impossible as long as node $i$ is in the {\it working} state. Thus, node $i$ will never execute a rule again. If node $i$ is attacked, it will be locked by $r_2$ and remains in its state until
its healing/recovery. After that, it will join the set of sleeping nodes.
\end{proof}

\begin{lemma} If a sensor node is enabled by rule $r_2$, then each one of its neighbors will execute at most one more rule until their next wakeup/probing, and this rule will be $r_1$.

\end{lemma}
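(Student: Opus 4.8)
The plan is to leverage the previous lemma to pin down the behaviour of the node that fires $r_2$, and then run a short case analysis on the state of an arbitrary neighbour.

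First I would note that if $i$ is enabled by $r_2$ then, upon executing it, $i$ enters the \emph{working} state and stays there: by the previous lemma, once $i$ switches to \emph{working} via $r_2$ it never executes a rule again before an attack, so in particular it never leaves the \emph{working} state. Hence, from that point on, every neighbour $j$ of $i$ permanently satisfies the predicate $W(j)$ (a working neighbour exists, namely $i$), at least until $i$ is itself attacked.

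Next I would fix a neighbour $j$ and branch on its state. If $j$ is \emph{sleeping} or \emph{locked}, none of $r_1$, $r_2$, $r_3$ is enabled ($r_1$ and $r_2$ need the \emph{probing} state, $r_3$ needs the \emph{working} state), so $j$ performs no action until it is woken into the \emph{probing} state and the claim holds with zero rule executions. If $j$ is \emph{probing}, then $W(j)$ makes $r_1$ enabled; and $r_1$ is the only rule $j$ can fire, because $r_3$ requires the \emph{working} state and the guard of $r_2$ requires either $\neg W(j)\wedge\neg P^*(j)$ --- impossible since $i$ is a working neighbour of $j$ --- or the attack predicate $A(j)$, which I would discard using the same ``until an eventual attack'' proviso as in the previous lemma. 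Thus $j$ fires exactly $r_1$, then moves to the \emph{sleeping} state, where (as above) no rule is enabled, so it does nothing more before its next probing; this yields ``at most one more rule, and it is $r_1$''.

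The hard part will be the case where a neighbour $j$ is already \emph{working} when $i$ becomes \emph{working}, since a priori $j$ could then fire $r_3$ and contradict the statement. To close it I would argue that such a $j$ must itself have reached the \emph{working} state through $r_2$ (the only rule that produces the \emph{working} state), so the previous lemma applies to $j$ and it executes no further rule at all --- or else, in the executions actually considered, $i$'s $r_2$-guard was satisfied through the branch $\neg W(i)\wedge\neg P^*(i)$, so $i$ had no working neighbour to begin with and this case is empty. I expect this, together with the careful elimination of the $A(j)$ branch of $r_2$ for probing neighbours, to be the only real subtlety; the rest is a direct reading of the three guards combined with the previous lemma.
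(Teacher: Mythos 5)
Your proof is correct and follows the same basic plan as the paper's --- fix the node $i$ that fires $r_2$, observe that it becomes and stays \emph{working}, and do a case analysis on each neighbour's state --- but yours is substantially more complete. The paper's proof simply asserts that when $i$ fires $r_2$ every neighbour is \emph{sleeping}, \emph{probing} or \emph{locked}, and for the probing case only remarks that such neighbours have a higher Id; it never actually derives the conclusion of the lemma, namely that the one rule a probing neighbour can still fire is $r_1$. You supply exactly that missing step: since $i$ is now a working neighbour, $W(j)$ holds permanently, so $r_1$ is enabled for a probing $j$ while the $\neg W(j)\wedge\neg P^*(j)$ branch of $r_2$ is dead and the $A(j)$ branch is excluded by the standing ``until an eventual attack'' proviso. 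You also confront the case the paper silently rules out: because the guard of $r_2$ is $(\neg W(i)\wedge\neg P^*(i))\vee A(i)$, node $i$ could in principle fire $r_2$ while having a \emph{working} neighbour $j$ (via the $A(i)$ disjunct), and such a $j$ with a higher Id could then fire $r_3$, contradicting the statement; your observation that this case is either empty under the no-attack proviso, or neutralised by Lemma~1 applied to $j$, is the right way to close it. The only caveat is that the Lemma~1 route assumes every working node reached that state through $r_2$, i.e.\ that no node is initially \emph{working} --- an assumption the paper never states --- so the cleaner disposal of that case is the emptiness argument via $\neg W(i)$.
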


\begin{proof} Let $i$ be a node that executes $r_2$.
When node $i$ changes to working state, all its neighbors are either in {\it sleeping} or {\it probing} or {\it locked} state. So we have three possible scenarios: i) neighbors in sleeping state: there is no conflict in this case.
ii) neighbors with probing state: those neighbors have a higher $Id$ than $i$. iii) locked neighbors will remain in their state until their healing/recovery before joining the set of sleeping nodes.

\end{proof}

\begin{lemma} Every sensor node is either independent or dominated or locked. 
\end{lemma}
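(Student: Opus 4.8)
The plan is to argue by cases on the state of an arbitrary node $i$, working in a quiescent configuration --- one in which no rule is enabled at any node, and which therefore persists (I take quiescence to also absorb the periodic re-probing of sleeping nodes mentioned in the previous lemmas, since otherwise the statement fails locally, e.g.\ two mutually adjacent sleeping nodes with no working neighbour form a rule-free pattern). The three predicates in the conclusion concern, respectively, working nodes, non-working non-locked nodes that have a working neighbour, and locked nodes; so the lemma reduces to two claims: (i) a working node has no working neighbour, hence is independent; (ii) a sleeping or probing node has a working neighbour, hence is dominated.

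For claim (i), suppose $i.state = working$ and some $j \in N_i$ has $j.state = working$. Identifiers are unique, so one of $i,j$ has the larger $Id$; call it $k$. Then $k$ has a working neighbour with smaller $Id$, i.e.\ $W^{*}(k)$ holds, so the guard of $r_3$ is satisfied at $k$, contradicting quiescence. Hence every neighbour of a working node lies in $\{sleeping, probing, locked\}$, which is exactly the definition of $i$ being independent.

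For claim (ii) I first eliminate probing nodes. If $i.state = probing$, then $r_1$ being disabled forces $\neg W(i)$; then $r_2$ being disabled forces $(W(i)\vee P^{*}(i))\wedge\neg A(i)$, so together with $\neg W(i)$ we get $P^{*}(i)$: $i$ has a probing neighbour of strictly smaller $Id$. That neighbour is again probing in a quiescent configuration, so the same reasoning applies to it, producing an infinite strictly decreasing sequence of identifiers inside the finite set $V$ --- impossible. So a quiescent configuration contains no probing node, and claim (ii) reduces to: every sleeping node has a working neighbour.

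The remaining sleeping case is the one I expect to be the main obstacle, because no rule is ever enabled at a sleeping node, so quiescence gives nothing locally; one must exploit \emph{how} the node reached the sleeping state. A node becomes sleeping only through $r_1$ or $r_3$, and both guards require a working neighbour $j$ at the moment of execution ($W(i)$ for $r_1$, $W^{*}(i)$ for $r_3$); since the only rule that ever sets a node to working is $r_2$, the first lemma guarantees that this $j$ remains working (until a possible attack, after which it becomes locked and the argument is re-applied to $i$ at its next probing). Combining this witness-tracking with the scheduling assumption --- a node freshly woken from sleeping immediately re-enters probing and is then covered by the previous paragraph --- yields that in the stabilised configuration the witness $j$ of a sleeping $i$ is still working, so $i$ is dominated. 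The delicate points to pin down are precisely this bookkeeping of the witness across re-probing rounds, and the status of nodes that have executed no rule at all since the (arbitrary) initial configuration, for which one invokes the wakeup process to show such configurations do not survive and the claimed partition is reached eventually.
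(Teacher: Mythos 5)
Your proposal is correct in substance and uses the same basic decomposition as the paper --- a case analysis on the state of an arbitrary node $i$ --- but it is considerably more rigorous, and the comparison is instructive. The paper's proof is three lines: it asserts that a non-independent working node ``may execute $r_3$'' and a non-dominated sleeping-or-probing node ``may execute $r_2$,'' and stops there. Both assertions are imprecise as written: $r_3$ requires a working neighbour of \emph{lower} Id, not merely a working neighbour, and $r_2$'s guard requires $i.state = probing$, so a sleeping node can never execute it directly. You repair the first point with the larger-Id trick (of two adjacent working nodes, the one with the greater identifier has $W^{*}$ enabled), and you replace the second with two genuinely new arguments: a minimal-identifier (descending-chain) argument showing that a quiescent configuration contains no probing node at all, and a witness-tracking argument, resting on Lemma~1's persistence of the working state, showing that a node that went to sleep via $r_1$ or $r_3$ retains a working neighbour. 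These fill real gaps. The one place where your proof remains a sketch --- initially-sleeping nodes that have never fired a rule, and the bookkeeping of the witness across re-probing rounds --- is exactly the place where the paper's proof also silently leans on the informally specified wakeup process rather than on the three rules; you are at least explicit that an additional scheduling assumption is being invoked, whereas the paper is not. In short: same skeleton, but your version identifies and patches the holes the paper glosses over, at the cost of one honestly acknowledged appeal to the wakeup mechanism.
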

\begin{proof} From the point of view of node $i$, we have three scenarios:\\
- if  node $i$ is in the $working$ state and is not  {\it independent}, then $i$ may execute rule~$r_3$.\\
- if  node $i$ is in the $sleeping \vee probing$ state and is not {\it dominated}, then  node $i$ may execute rule $r_2$.\\
- if  node $i$ is in the $locked$ state, then  node $i$ will remain in its state until its healing/recovery. 

\end{proof}

\begin{lemma}
When a node is not locked $\vee$ sleeping, it can make at most $2$ moves.
\end{lemma}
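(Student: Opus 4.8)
The plan is to proceed by a short case analysis on the state a node can occupy while it is neither locked nor sleeping, that is, while it is in the \emph{working} or the \emph{probing} state. The key structural observation is that each of the three rules, when executed by a node $i$, moves $i$ out of its current state: $r_1$ sends a probing node to the sleeping state, $r_2$ sends a probing node to the working state, and $r_3$ sends a working node to the sleeping state. Hence, among the two ``active'' states, the only available transitions are from probing to working (via $r_2$) and then from working to sleeping (via $r_3$), besides the direct transition from probing to sleeping (via $r_1$); so, starting from any active state, a node leaves the set of active states after at most two rule executions.

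Concretely, I would argue as follows. If $i$ is currently in the working state, then the only rule whose guard $i$ can satisfy is $r_3$ (the guards of $r_1$ and $r_2$ require the probing state), and executing $r_3$ puts $i$ to sleep, so $i$ makes at most one move. If $i$ is currently in the probing state, then $i$ can only execute $r_1$ or $r_2$: executing $r_1$ sends $i$ to sleep, which is one move and we are done; executing $r_2$ turns $i$ into a working node, and the previous case then contributes at most one further move, giving at most two in total. I would also note that, by the first lemma above, the node that enters the working state through $r_2$ in fact makes no further move until it is attacked, so the bound of two is deliberately conservative; but two is all that is used afterwards.

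The point that must be pinned down is the \emph{scope} of the count and the precise meaning of a ``move''. A node that is put to sleep may later be woken up and start a fresh active phase, so the statement has to be read per active phase, equivalently between two consecutive wakeups; I would make explicit that reaching the sleeping state terminates the phase, so that no third rule execution can occur within it. The only genuine verification needed is that no rule ever sends a node from working back to probing, nor re-enables a rule for $i$ in the state just entered --- which is immediate from inspecting the three assignments, since every rule ends by setting $i.state$ to sleeping or to working and none sets it to probing. This is what makes the transition graph on the active states acyclic and turns the count of two into a true upper bound rather than the length of one particular trajectory; I expect this bookkeeping, rather than any deep argument, to be the main (and fairly mild) obstacle.
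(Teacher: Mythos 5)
Your state-machine argument is sound for the reading you adopt, but it takes a genuinely different route from the paper's, and the difference in scope matters. You bound the moves within one \emph{active phase} by observing that the transition graph on the two active states is acyclic ($r_1$: probing $\to$ sleeping; $r_2$: probing $\to$ working; $r_3$: working $\to$ sleeping), so your worst case is $r_2$ followed by $r_3$ --- a trajectory that Lemma~1 in fact forbids, as you yourself remark. The paper instead invokes Lemmas~1 and~2 to assert that each rule is executed at most once by a node over the \emph{entire} execution, and identifies the unique two-move history as $r_3$ followed, after a sleep and a subsequent wakeup, by $r_2$ (a node that starts in the working state); its count is global and spans sleeping periods. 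That global reading is the one the paper actually needs: the theorem that follows charges at most $2$ moves to each of the $n$ nodes to obtain the $2n$ convergence bound, which a per-phase bound does not deliver, since your argument alone does not exclude a node accumulating many moves over repeated sleep/wake cycles (probing $\to r_1 \to$ sleeping $\to$ wakeup $\to$ probing $\to r_1 \to \dots$). So your local analysis is clean, elementary, and arguably more careful than the paper's two-line proof, but to support the $2n$ theorem it must be supplemented by an argument --- the paper's appeal to Lemmas~1 and~2, or an explicit claim about what happens on re-entering the probing state --- that caps the total number of rule executions per node, not just the number per active phase.
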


\begin{proof} By Lemma 1 and Lemma 2, each rule can be executed at most once by a node. Hence, the only case a node makes two moves is when it executes $r_3$ then $r_2$ with a \textit{working} state.

\end{proof}

\begin{theorem}
With respect to the legitimate state $\cal L$ of the network, the proposed algorithm converges within $2n$ moves.
\end{theorem}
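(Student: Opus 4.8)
The plan is to obtain the theorem from Lemmas~1--4 in two essentially independent parts: a counting argument that bounds the number of moves by $2n$ and forces termination, and a structural argument that identifies the terminal configuration with a legitimate one.

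For the counting part, first I would observe that a node which never leaves the \textit{sleeping} or \textit{locked} state is never enabled by any of $r_1$, $r_2$, $r_3$, and hence makes no move. Every other node is, at some point in the execution, in the \textit{probing} or \textit{working} state, and Lemma~4 (a not-locked, not-sleeping node makes at most two moves) bounds its total contribution by $2$; Lemmas~1 and~2 are what guarantee no rule is ever executed twice by the same node, so this per-node count is genuine. Summing over the $n=|V|$ nodes yields at most $2n$ moves; in particular the execution is finite and reaches a configuration in which no rule is enabled after at most $2n$ moves.

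For the legitimacy part, I would take that terminal configuration and apply Lemma~3: every node is independent, dominated, or locked. A \textit{working} node is, by definition, neither dominated nor locked, hence independent, which is precisely the structural content of $\mathcal{L}$. It remains to verify the compartment condition $i.state=working \Rightarrow i.compartment\in\{S,I\}$. Arguing by contradiction, if some working node $i$ had $i.compartment=R$, then $A(j)$ would hold at each neighbour $j\in N_i$; a \textit{probing} neighbour would then have the guard of $r_2$ satisfied through the disjunct $A(j)$, which is impossible at termination, and its firing would set $i$ to \textit{locked}, contradicting $i.state=working$. The only remaining case, that all neighbours of $i$ are \textit{sleeping} or \textit{locked}, I would exclude using the invariant that an attacked node is locked by a neighbour's execution of $r_2$ before it can be observed in the \textit{working} state. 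Hence every working node sits in $S$ or $I$ and the terminal configuration lies in $\mathcal{L}$.

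The main obstacle is this last (compartment) step rather than the counting, which is almost immediate from Lemma~4. The delicate point is that Lemma~1 freezes a node only \emph{until an eventual attack}: once attacks are interleaved with the moves, both the $2n$ bound and the word ``terminal'' must be read relative to an attack-free suffix of the execution (equivalently, per stabilization epoch between two attacks), each such epoch costing, by Lemma~2, only one $r_1$ move per neighbour of the attacked node plus the locking $r_2$. Making that epoch bookkeeping precise, and ruling out the degenerate fixed point in which a \textit{working} node in compartment $R$ is surrounded exclusively by \textit{sleeping} nodes, is where the care is needed; the rest is a direct consequence of Lemmas~1--4.
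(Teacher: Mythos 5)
Your proposal takes essentially the same route as the paper: the paper's entire proof is the single sentence ``This follows from Lemma 1 to Lemma 4,'' and your decomposition --- Lemma~4 giving at most $2$ moves for each of the $n$ nodes, hence $2n$ moves and termination, plus Lemma~3 to read off the structure of the terminal configuration --- is exactly the intended expansion of that citation. The soft spots you flag (the compartment condition $i.state = working \Rightarrow i.compartment = S \vee I$, the reading of the bound relative to attack-free epochs, and the degenerate case of a working node in $R$ surrounded by sleeping nodes) are genuine gaps left open by the paper itself rather than defects you have introduced, so your write-up is, if anything, more honest about what Lemmas~1--4 actually deliver.
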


\begin{proof} This follows from Lemma 1 to Lemma 4.

\end{proof}

\section{Simulations}

This section is dedicated to the evaluation of the SIR approach through experiments.
We will show, using both the mathematical modeling and a basic wireless sensor network
designed in Python, that taking place in the conditions of Proposition~\ref{laprop} 
is a guarantee to achieve information survivability in WSNs.

\subsection{Mathematics-based simulations}

In this first illustration written in Python language, the
initial number of susceptible sensors is set to 300
while 3 nodes initially receive the datum.
System~\ref{modelSir3} is then discretized and 4 experiments have
been conducted, leading twice to the situation $R_0<1$, and twice
to the opposite situation.

Figure~\ref{simul} shows the obtained result. We can see that the $I$ 
compartment is never empty when $R_0>1$, leading to a data
 survivability in this SIR model simulation. Conversely, 
 when $R_0<1$, the information is obviously lost.

\begin{figure}[ht]
\begin{center}
  \subfigure[$R_0 = 118.51$]{\includegraphics[scale=.35]{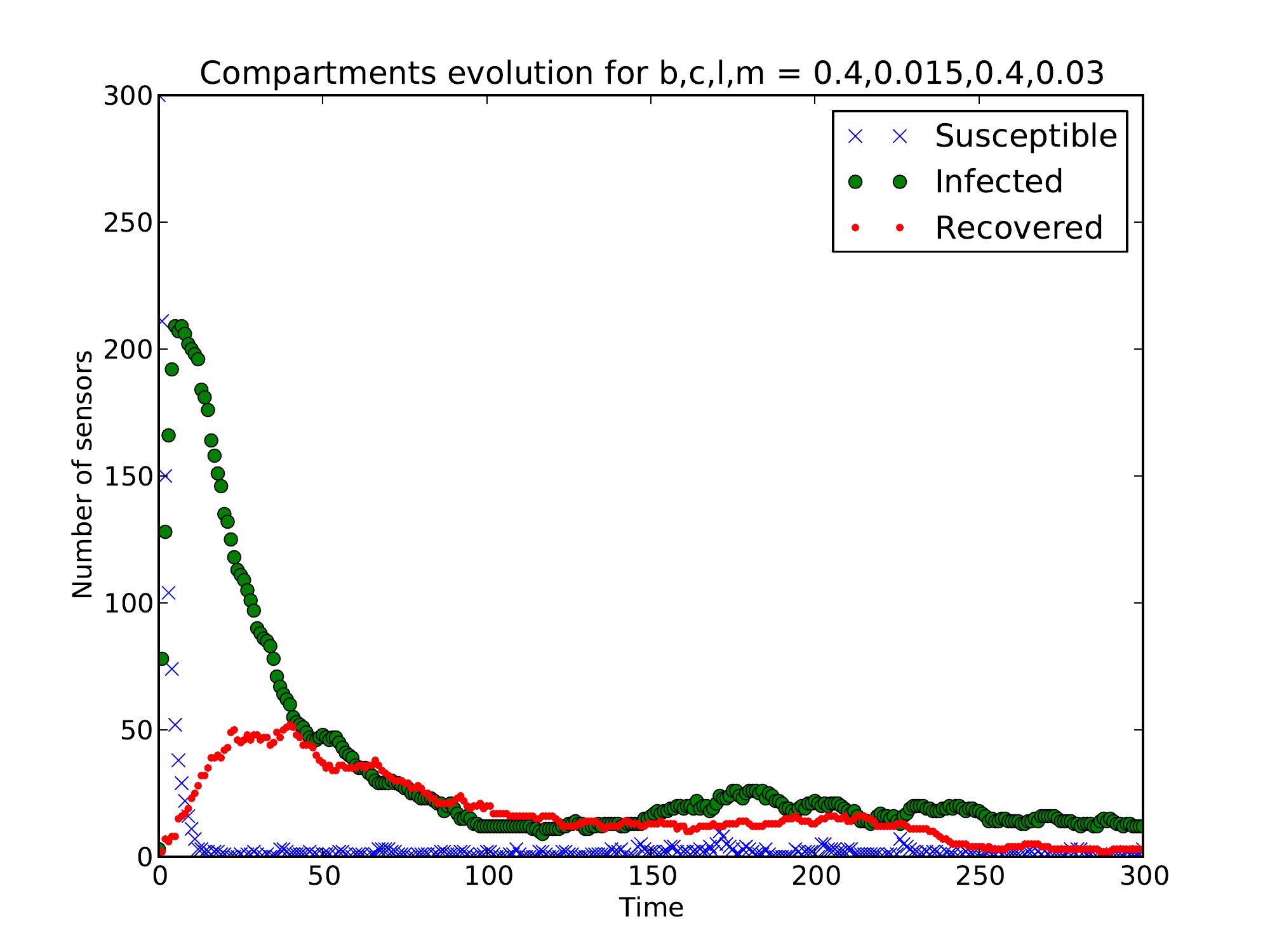}}\quad
  \subfigure[$R_0 = 255.81$]{\includegraphics[scale=.35]{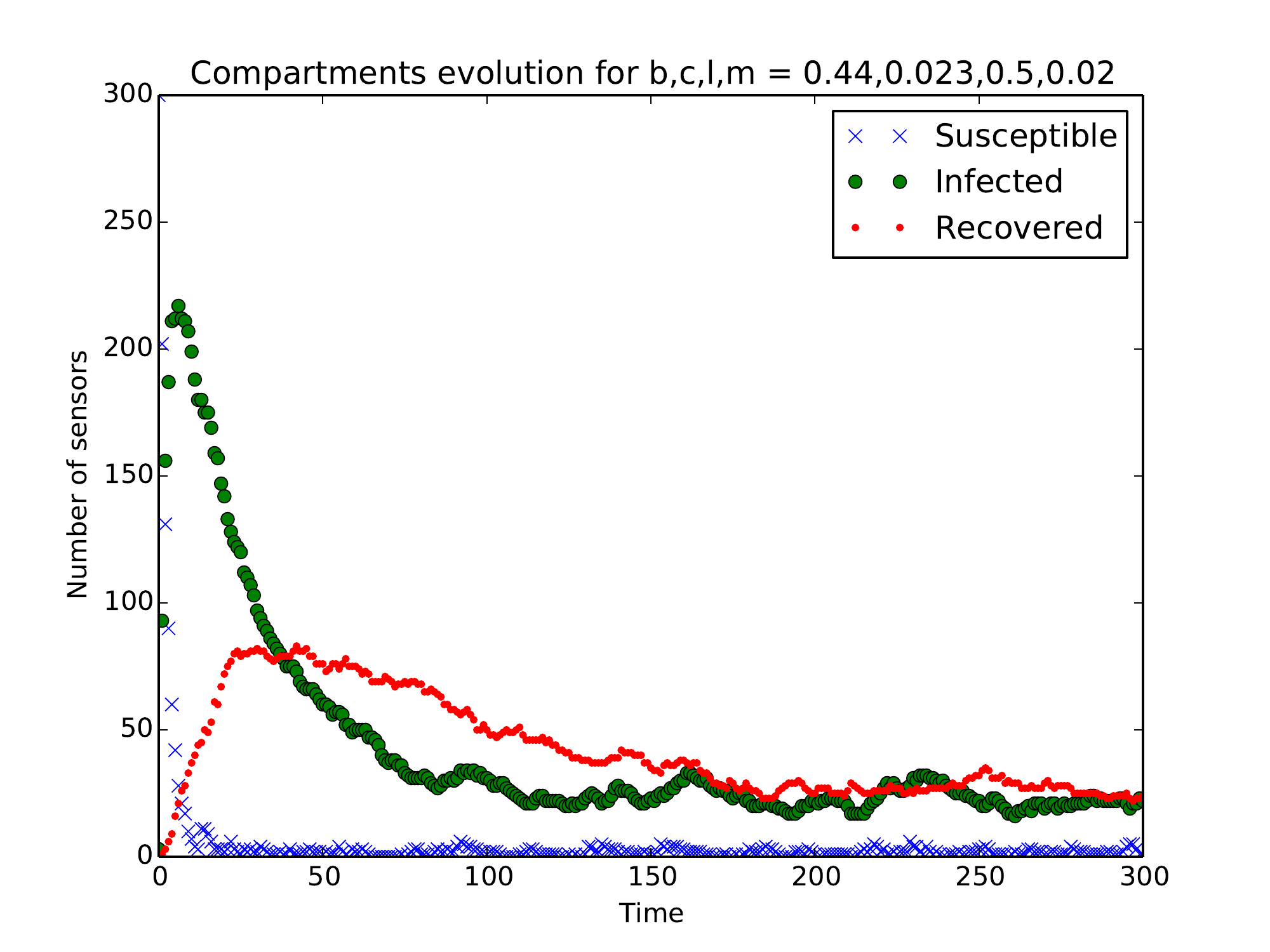}}\\
  \subfigure[$R_0 = 0.29$]{\includegraphics[scale=.35]{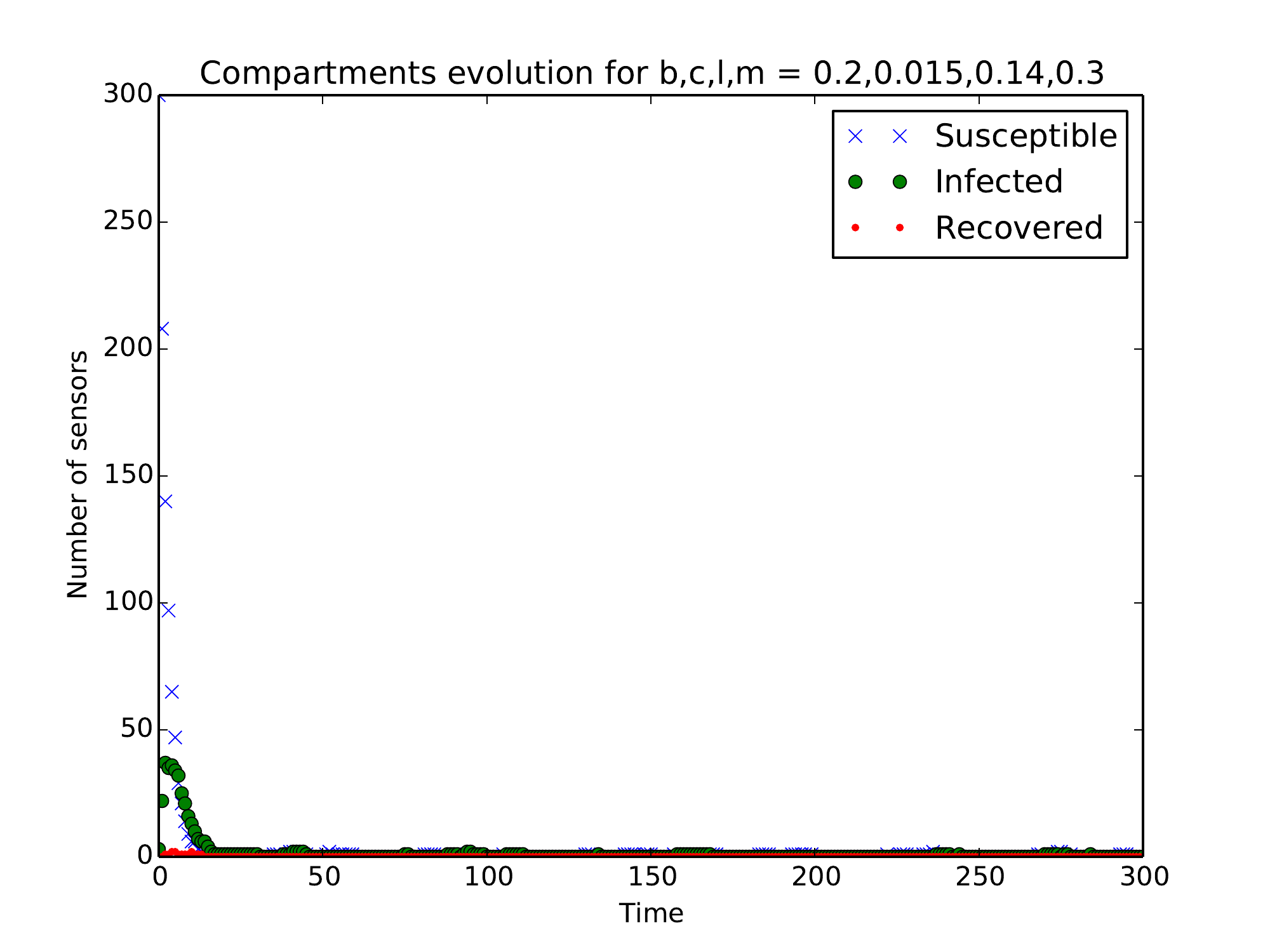}}\quad
  \subfigure[$R_0 = 0.65$]{\includegraphics[scale=.35]{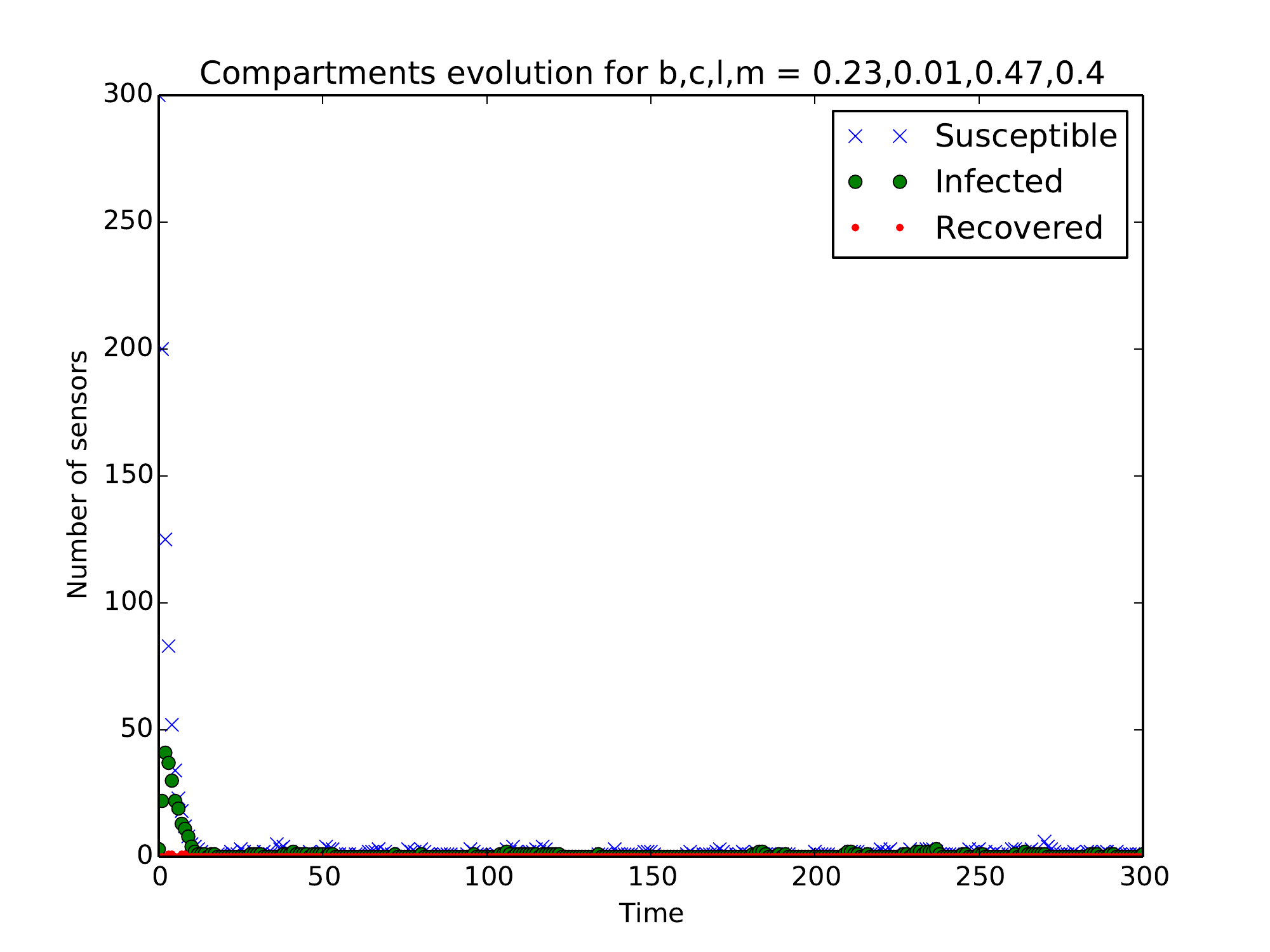}}\\
\end{center}
\caption{Simulation of SIR model with birth and death rates and various $R_0$}
\label{simul}
\end{figure}

\subsection{Networks simulation using Python}

In this second set of experiments, we show that the 
time period of the 
presence of the information can be extended in a wireless sensor network
simulated with Python, and when satisfying Proposition~\ref{laprop}.

\begin{figure}[h]
\centering
\subfigure[$R_0=84.69$]{\label{simulnouvelles1}\includegraphics[scale=0.23]{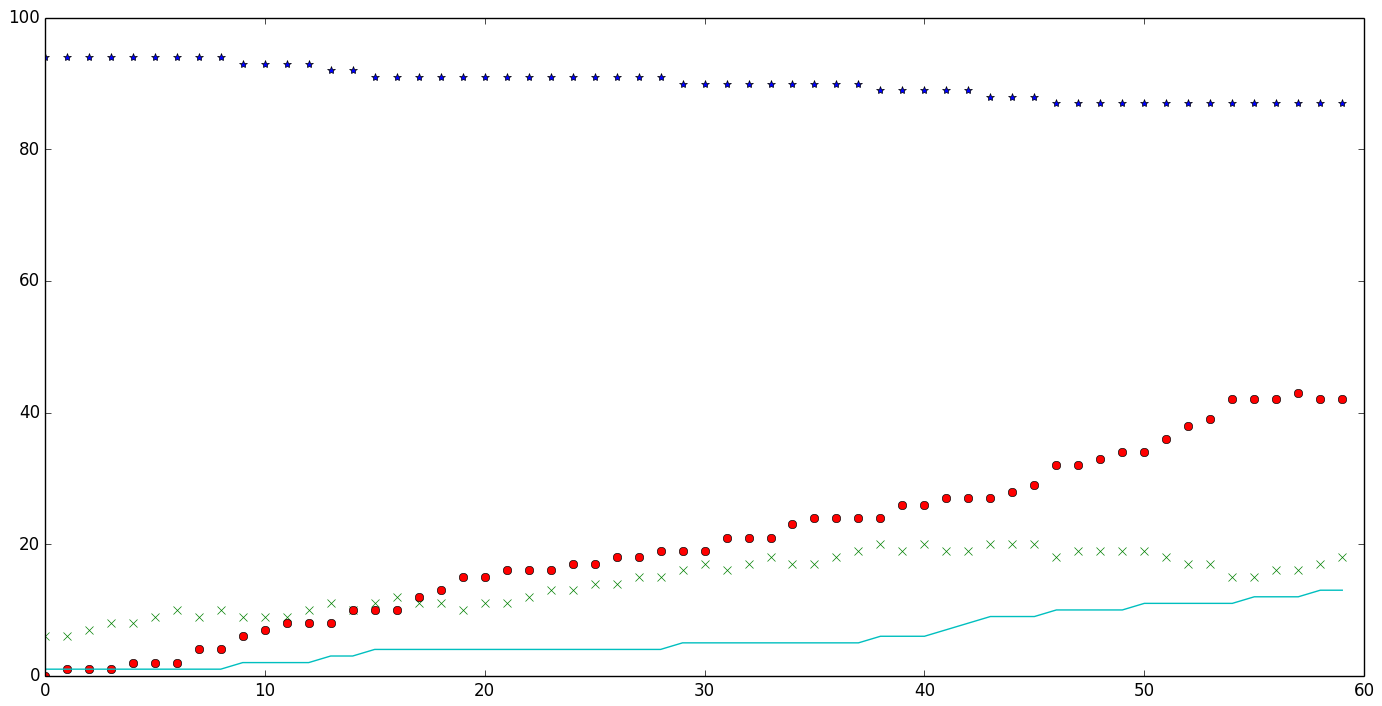}}
\subfigure[$R_0=0.06$]{\label{simulnouvelles2}\includegraphics[scale=0.23]{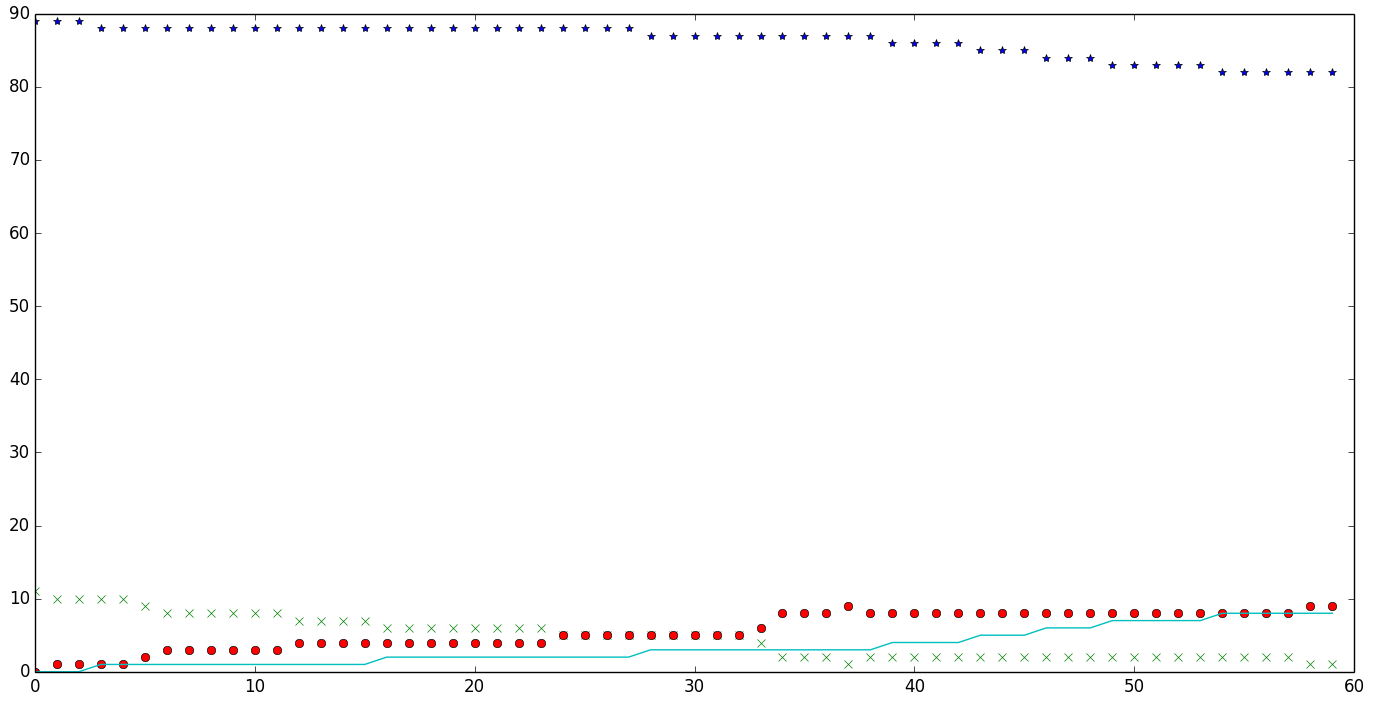}}
\caption{Simulation of a wireless sensor network} 
\end{figure}

We have firstly deployed $N=100$ sensors, all belonging in the susceptible compartment, and
with respect to the algorithm detailed in the previous section.
In the initial condition, each sensor has a probability of 10\% to have detected
an intrusion (this is the information). At each time unit, an average of $lN$ new sensors
are awaken. For each informed sensor and for each of its susceptible neighbor, the data is
sent with a probability $bI$. The death rate of each sensor is set to $m$ (each awaken sensor 
has the probability $m$ to empty its battery during the considered time unit), while 
each informed sensor has a probability $c$ to loose the information (to move in the R 
compartment). The whole network is observed during 60 time units.

We have firstly set $l=0.017$, $m=0.0018$, $c=0.035$, and $b=0.33$, which
leads to $R_0=84.69$, and to the situation depicted in Figure~\ref{simulnouvelles1}.
In this experiment, * symbols have been used for the susceptible sensors, $\times$ for the
informed ones, a circle is for the recovered ones, while the straight line counts the number
of dead sensors. A second set of parameters has led to $R_0=0.06$, and to the
situation described in Figure~\ref{simulnouvelles2}.

\subsection{100 experiments with random parameters}

We have then launched the previous simulator 100 times with random parameters. 
At each simulation, probability $l$ is randomly picked in the interval [0,0.2[,
$m$ is chosen in [0,0.01[, $c$ is picked in [0,0.1[, while $b$ is in [0,0.033[,
in order to be close to a real situation while having $R_0<1$ and $R_0>1$ 
both represented. During these 100 experiments, we have obtained 39 times 
the situation $R_0<1$ with an average of 0.34 (and 61 times the situation $R_0>1$,
16.05 of average).

We found an average number of informed sensors equal to 15.50 in the first situation,
while it is the double in the second one (33.12 informed sensors in average).
In 7 of the 39 simulations with $R_0<1$ (17.95\%), the number of informed sensors has 
reached 0, while the information has disappeared 2 times during the 61 other simulations
(3.27\%). The minimum of informed sensors is attained at the 35-th 
time unit (in average) in the first situation, while we reach it earlier in 
the second one (31-th time unit).

To sum up, the information has disappeared in 3.27\% of the simulations when $R_0>1$, while
it has been lost in 17.95\% of the cases in the second situation.

\section{Conclusion}
\label{CONC}


This paper presented an efficient technique that uses epidemic domain models in the context of data survival in unattended WSNs. We studied two models (SIR and SIS) that can ensure the survivability of the datum in presence of different types of attacks. 
We showed that our method is well adapted to UWSN scenarios. 
In a second step, we proposed and analyzed
an efficient distributed algorithm to tackle the problem of data survivability. 
In future work, the authors' intention is to take into account the possibility of aggregation layers in the wireless sensor networks~\cite{bgm11:ij}: aggregators could transfer only one alert signal for all their neighborhood.

\bibliographystyle{plain}
\bibliography{biblio}

\end{document}